\newtheorem{theorem}{Theorem}[section]
\newtheorem{lemma}[theorem]{Lemma}
\newtheorem{claim}{Claim}
\newtheorem*{claim*}{Claim}
\newtheorem{remark}[theorem]{Remark}
\newtheorem{corollary}[theorem]{Corollary}
\newcommand{\LOCAL}{LOCAL\xspace}
\newcommand{\CONGEST}{CONGEST\xspace}
\newcommand{\eps}{\varepsilon}
\newcommand{\logstar}{\ensuremath{\log^*}}
\DeclareMathOperator{\poly}{poly}
\DeclareMathOperator{\polylog}{poly\log}
\DeclareMathOperator{\quasipolylog}{quasi\polylog}
\newcommand{\kz}{\ensuremath{k}}
\newcommand{\ka}{\ensuremath{k'}}
\newcommand{\tz}{\ensuremath{\tau}}
\newcommand{\ta}{\ensuremath{\tau'}}
\newcommand{\Pow}{\ensuremath{\mathcal{P}}}
\newcommand{\PP}{\mathfrak{P}}
\newcommand{\myemail}[1]{\,$\cdot$\, {\small #1}}
\newcommand{\myaff}[1]{\,$\cdot$\, {\small #1}\par\smallskip}
\newenvironment{mycover}
{\list{}{\listparindent 0pt
        \itemindent    \listparindent
        \leftmargin    1cm
        \rightmargin   0.5cm
        \parsep        0pt}%
    \raggedright
    \item\relax}
{\endlist}
\begin{document}
\begin{mycover}
{\huge\bfseries\boldmath Local Conflict Coloring Revisited:\\ Linial for Lists \par}
\bigskip
\bigskip
\bigskip

\textbf{Yannic Maus}
\myemail{yannic.maus@cs.technion.ac.il}
\myaff{Technion}

\textbf{Tigran Tonoyan}
\myemail{ttonoyan@gmail.com}
\myaff{Technion}
\end{mycover}
\bigskip

\begin{abstract}
Linial's famous color reduction algorithm reduces a given $m$-coloring of a graph with maximum degree $\Delta$ to a $O(\Delta^2\log m)$-coloring, in a single round in the \LOCAL model. We show a similar result when nodes are restricted to choose their color from a list of allowed colors: given an $m$-coloring in a directed graph of maximum outdegree $\beta$, if every node has a list of size $\Omega(\beta^2 (\log \beta+\log\log m + \log \log |\mathcal{C}|))$ from a color space $\mathcal{C}$ then they can select a color in two rounds in the \LOCAL model. Moreover, the communication of a node essentially consists of sending its list to the neighbors.
This is obtained as part of a framework that also contains Linial's color reduction (with an alternative proof) as a special case. Our result also leads to a \emph{defective list coloring} algorithm.
As a corollary, we improve the state-of-the-art \emph{truly local} $(deg+1)$-list coloring algorithm from Barenboim et al. [PODC'18] by slightly reducing the runtime to $O(\sqrt{\Delta\log\Delta})+\log^* n$ and significantly reducing the message size (from huge to roughly $\Delta$). 
Our techniques are inspired by the \emph{local conflict coloring} framework of Fraigniaud et al. [FOCS'16]. 
\end{abstract}

\thispagestyle{empty}
\setcounter{page}{0}
\newpage
\section{Introduction}\label{sec:intro}
\emph{Symmetry breaking problems}  are a cornerstone of distributed graph algorithms in the \LOCAL model.\footnote{In the \LOCAL model \cite{linial92, peleg00} a communication network is abstracted as an $n$-node graph $G=(V, E)$ with unique $O(\log n)$-bit identifiers. Communications happen in synchronous rounds. Per round, each node can send one (unbounded size) message to each of its neighbors. At the end, each node should know its own part of the output, e.g., its own color. In the \CONGEST model, there is a limitation of $O(\log n)$ bits per message.}  A central question in the area asks: \textit{How fast can these problems be solved in terms of the maximum degree $\Delta$, when the dependence on $n$ is as mild as $O(\logstar n)$?} \cite{barenboimelkin_book}. 
That is, we are looking for  \emph{truly local} algorithms, with complexity of the form $f(\Delta)+O(\logstar n)$.\footnote{Not all symmetry breaking problems admit such a runtime, e.g., $\Delta$-coloring and sinkless orientation \cite{brandt2016LLL}.} The $O(\logstar n)$  term is unavoidable due to the seminal lower bound by Linial \cite{linial92} that, via simple reductions, applies to most typical symmetry breaking problems. $(\Delta+1)$-\emph{vertex coloring} and \emph{maximal independent set (MIS)} are the key symmetry breaking problems. Both can be solved by simple centralized greedy algorithms (in particular they are always solvable), and even more importantly in a distributed context, any partial solution (e.g. a partial coloring) can be extended to a complete solution. The complexity of MIS is settled up to constant factors with $f(\Delta)=\Theta(\Delta)$, by the algorithm from \cite{BarenboimEK14} and a recent breakthrough lower bound by Balliu et al. \cite{FOCS19MIS}. 
In contrast, the complexity of vertex coloring, despite being among the most studied distributed graph problems \cite{barenboimelkin_book}, remains widely open, with the current best upper bound $f(\Delta)=\tilde{O}(\sqrt{\Delta})$ and lower bound $f(\Delta)=\Omega(1)$. 
While most work has focused on the $(\Delta+1)$-coloring problem, recent algorithms, e.g., \cite{barenboim16,FHK,BEG18,Kuhn20}, 
 rely  on the more general \emph{list coloring problem} as a subroutine, where each vertex $v$ of a graph $G$ has a list $L_v\subseteq \mathcal{C}$ of colors from a colorspace $\mathcal{C}$, and the objective is to compute a proper vertex coloring, but each vertex has to select a color from its list. Again, a natural case is the always solvable \emph{$(deg+1)$-list coloring problem}, where the list of each vertex is larger than its degree. 
Our paper contributes to the study of list coloring problems. To set the stage for our results, let us start with an overview of truly local coloring algorithms.
 
 In~\cite{linial92}, besides the mentioned lower bound, Linial also showed that  $O(\Delta^2)$-coloring can be done in $O(\logstar n)$ rounds. Szegedy and Vishwanathan \cite{SzegedyV93} improved the runtime to $\frac{1}{2}\logstar n+O(1)$ rounds, and  showed that in additional $O(\Delta\cdot \log \Delta)$ rounds, the $O(\Delta^2)$-coloring could be reduced to $(\Delta+1)$-coloring. The latter result was rediscovered by Kuhn and Wattenhofer \cite{KuhnW06}. Barenboim, Elkin and Kuhn used \emph{defective coloring} for  partitioning the graph into low degree subgraphs and coloring in a divide-and-conquer fashion, and brought the complexity of $(\Delta+1)$-coloring down to $O(\Delta + \logstar n)$ \cite{BarenboimEK14}. A simpler algorithm with the same runtime but without using defective coloring was obtained recently in~\cite{BEG18}. All of these results also hold for  $(deg+1)$-list coloring, since given a $O(\Delta)$-coloring, one can use it as a ``schedule'' for computing a $(deg+1)$-list coloring in $O(\Delta)$ additional rounds. Meantime, two sub-linear in $\Delta$ algorithms were already published \cite{barenboim16,FHK}. They both used \emph{low outdegree colorings}, or \emph{arb-defective colorings}, introduced by Barenboim and Elkin in~\cite{barenboimE10}, for graph partitioning purposes.  The basic idea here is similar to the defective coloring approach, with the difference that the graph is partitioned into  directed subgraphs with low \emph{maximum outdegree}. In \cite{BEG18} they also improved and simplified the computation of low outdegree colorings, which led to improved runtime and simplification of that component in the mentioned sublinear algorithms.  As a result, the currently fastest algorithm in \CONGEST needs $O(\Delta^{3/4} +\logstar n)$ rounds (\cite{barenboim16}+\cite{BEG18}),\footnote{For more colors, i.e.,  $(1+\eps)\Delta$-coloring, \cite{barenboim16} gives runtime $O(\sqrt{\Delta}+\logstar n)$ in \CONGEST.} while the fastest one  
 in \LOCAL  needs $O(\sqrt{\Delta\log\Delta}\logstar \Delta+\logstar n)$ rounds (\cite{FHK}+\cite{BEG18}). 

Let us take a better look at the latter result, which is the closest to our paper. The algorithm consists of two main ingredients:
(1) an algorithm that partitions a given graph into $p=O(\Delta/\beta)$ subgraphs, each equipped with a $\beta=O(\sqrt{\Delta/\log \Delta})$-outdegree orientation, in $O(p)+\frac{1}{2}\logstar n$ rounds~\cite{BEG18}, (2) a list coloring subroutine that gives rise to the following~\cite{FHK}:
\begin{theorem}[\cite{FHK}]
\label{thm:FHK}
In a directed graph with max. degree $\Delta$, max. outdegree $\beta$, and an input $m$-coloring, list coloring with lists $L_v$ of size  $|L_v|\geq 10\beta^2\ln \Delta$ from any color space can be solved in $O(\logstar(\Delta+ m))$ rounds in \LOCAL.
\end{theorem}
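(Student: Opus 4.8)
The plan is to lift Linial's iterated color-reduction scheme to the list setting, working throughout with \emph{conflict coloring}: an instance is the given outdegree-$\beta$ digraph, a list $L_v\subseteq\mathcal C$ at every vertex $v$, and on every arc $(u,v)$ a symmetric conflict relation on $L_u\times L_v$; list coloring is the case where two colors conflict exactly when they are equal, and the goal is to pick one non-conflicting color per vertex. I would track the size $N=|\mathcal C|$ of the color space, the \emph{conflict degree} $d$ (the largest number of colors at one end of an arc that conflict with a fixed color at the other, so $d=1$ for list coloring), the smallest list size $\ell$, and treat the input $m$-coloring as a side resource for symmetry breaking. The algorithm is then: repeatedly apply a one-round \emph{reduction step} that shrinks $N$ drastically while degrading $\ell$ and $d$ only mildly, until the instance is small enough to finish in $O(1)$ rounds, all the while maintaining a locally computable map that translates a solution of the current instance back into a proper list coloring of the original graph.

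The core is the reduction step: from an instance of color-space size $N$, conflict degree $d$, outdegree $\beta$, plus an auxiliary proper coloring, produce in $O(1)$ rounds a new instance on the same digraph — together with the back-translation — whose color space has size $N'=O(\poly(\beta,d)\cdot\log N)$ (an exponential compression) and whose conflict degree is $O(d)$, valid whenever $\ell$ exceeds a threshold of order $\poly(\beta,d)\cdot\log N$. I would construct it in Linial's nonconstructive style: argue by counting that a sufficiently spread-out family of hash-like maps (a cover-free-type family) exists — this existence proof is the only place where a probabilistic/counting argument enters, and it is what fixes the $\log N$ in the threshold — and then have each vertex use its own color in the auxiliary coloring, the auxiliary colors of its at most $\beta$ out-neighbors, and one fixed such family to carve a small ``surviving'' sublist out of $L_v$ over the compressed space in a way that keeps all arc-conflicts consistent after back-translation. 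Because the threshold contains $\log N$, I would not unleash the reductions directly on the possibly enormous $\mathcal C$: first spend $O(\logstar m)$ rounds running \emph{plain} Linial color reduction (over the outdegree-$\beta$ orientation) on the input $m$-coloring to bring its palette down to $\poly(\Delta)$, then use that $\poly(\Delta)$-coloring to compress $\mathcal C$ itself down to size $\poly(\Delta)$ in $O(1)$ rounds. From then on $\log N=O(\log\Delta)$, so the hypothesis $|L_v|\ge 10\beta^2\ln\Delta$ licenses every subsequent reduction.

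The rest is bookkeeping: once $N$, $\ell$, $d$ and the auxiliary palette are all of size $\poly(\Delta)$, iterating the reduction sends $N$ through a shrinking tower and reaches $\poly(\beta)$ after $O(\logstar\Delta)$ more rounds (the quality $\ell/d$ has only degraded by a bounded factor along the way, so it stays comfortably above $\beta$), and a short final phase resolves the resulting bounded-size instance; composing the chain of back-translations yields the desired coloring. In total $O(\logstar m)+O(\logstar\Delta)=O(\logstar(\Delta+m))$ rounds. The step I expect to be the real obstacle is the reduction step itself: one must simultaneously get an exponential contraction of the color space, keep the conflict degree (hence the list-size slack) under control, supply a genuinely \emph{local} back-translation, and do all of it \emph{deterministically} in \LOCAL — and it is precisely the fact that a good cover-free-type family exists under a bound involving only $\log N$ (and $\poly(\beta,d)$), not $|\mathcal C|$ or $n$, that lets a $\Delta$-only list-size requirement coexist with an $O(\logstar(\Delta+m))$ running time.
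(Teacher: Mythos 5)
The paper does not prove \Cref{thm:FHK} --- it cites it from [FHK] and only sketches [FHK]'s method in the Technical Overview --- so the comparison is against that sketch and against the paper's own cover-free-family framework. Your overall shape (recast as per-arc conflict coloring, iterate a one-shot cover-free reduction $O(\logstar)$ times, break ties with the input $m$-coloring) is the right genus of argument, and you correctly flag the single reduction step as the crux, but two of your mechanisms are genuine gaps rather than details to fill.

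First, colorspace independence. You propose to compute a $\poly(\Delta)$-coloring of $G$ and then ``use that $\poly(\Delta)$-coloring to compress $\mathcal{C}$ itself down to size $\poly(\Delta)$''. No such step exists: a proper coloring of the vertices of $G$ gives no handle on an arbitrary external color space, since the lists $L_v\subseteq\mathcal{C}$ need not correlate with auxiliary colors in any usable way. What [FHK] actually does, per the paper's description, is re-index each node's list to $[1,l]$ \emph{locally and independently}, with the conflict relation made edge-dependent ($i$ at $u$ conflicts with $j$ at $v$ iff the $i$-th entry of $L_u$ equals the $j$-th entry of $L_v$). That genuinely shrinks the color space to size $l$, but the price is a full-blown \emph{local} conflict coloring instance, which is exactly why [FHK] needs machinery for local conflict coloring in full generality --- \Cref{thm:FHK} is, as the paper stresses, only a special case of it. In your plan the instance stays over $\mathcal{C}$, so the $\log N$ in your threshold stays $\log|\mathcal{C}|$ and the $O(\logstar(\Delta+m))$ budget does not follow from the stated hypothesis $|L_v|\ge 10\beta^2\ln\Delta$.

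Second, the reduction step itself. Your ``carve a small surviving sublist out of $L_v$'' amounts to wanting a single set $C_{x,L}\subseteq L$ per input pair (auxiliary color $x$, list $L$), with the sets forming a cover-free family. \Cref{ssec:sublistsystem} observes explicitly that such a family does \emph{not} exist --- this failure is precisely where list coloring parts ways with Linial's plain color reduction. [FHK] circumvents it by spending extra communication rounds to filter candidate sets per node (hence the paper's remarks about huge messages and amplification depth $t\approx 3\logstar(\Delta+m)$); the paper itself circumvents it with a second level of indirection, $P_2$, where every input gets a whole \emph{collection} of candidate sublists (\Cref{thm:2rounds}). Your sketch contains neither escape hatch, so the base step of your iteration has no valid construction as written.
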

The two ingredients are combined to give a $(deg+1)$-list coloring of the input graph $G$ \cite{FHK}. First, partition $G$ using (1), iterate through the $p$ directed subgraphs, and for each of them, let uncolored nodes refine their lists by removing colors taken by a neighbor, and use (2) to color nodes that still have sufficiently large lists ($\geq 10\beta^2\ln\Delta$). With a fine grained choice of $\beta$ it is ensured that every node $v$ with (refined) list size greater than 
$\Delta/2$ is colored. Thus, after iterating through all $p$ subgraphs, all uncolored nodes have list size at most $\Delta/2$, and because each vertex always has \emph{more colors in its list than uncolored neighbors}, the max degree of the graph induced by uncolored nodes is at most half  of that of $G$. Thus, we can apply the partition-then-list-color paradigm recursively to the subgraph induced by uncolored nodes to complete the coloring. 
The runtime is dominated by the first level of recursion.

The strength of the partitioning algorithm above is that it is conceptually simple and works with small messages. In contrast, the algorithm from Thm.~\ref{thm:FHK} is conceptually complicated and uses gigantic messages. This complication might be due to the generality of the addressed setting as, in fact, \cite{FHK} studies the more general local conflict coloring problem, and Thm.~\ref{thm:FHK} is only a special case. In  \emph{local conflict coloring}, each edge of the given graph is equipped with an arbitrary conflict relation between colors and this relation may vary across different edges. This framework is also leveraged to achieve the  colorspace size independence of Thm.~\ref{thm:FHK} (see the technical overview below).
It was not clear prior to our work whether the situation simplifies significantly if one restricts to ordinary list coloring because, even if the input to the algorithm in Thm.~\ref{thm:FHK} is a list coloring problem, 
the intermediate stages of the algorithm fall back to the more general local conflict coloring.

The overarching goal of our paper is providing deeper understanding of the remarkable framework of~\cite{FHK}, better connecting it with classic works in the area, and obtaining a simpler  list coloring algorithm that also uses smaller messages, by moderately sacrificing generality. 

\subsection{Our Contribution} 
Our main  result is a simple algorithm that yields the following theorem. \begin{theorem}[Linial for Lists]
\label{thm:2rounds}
In a directed graph with max. degree $\Delta$, max. outdegree $\beta$, and an input $m$-coloring, list coloring with lists $L_v$ from a color space $\mathcal{C}$ and of size  $|L_v|\geq l=4e\beta^2(4\log \beta+\log\log |\mathcal{C}|+\log\log m+8)$  can be solved in 2 rounds in  \LOCAL. Each node sends $l+1$ colors in the first round, and a $l/\beta^2$-bit message in the second.
\end{theorem}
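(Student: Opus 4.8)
The plan is to view this as a ``list‑aware'' version of Linial's polynomial‑based colour reduction: rather than letting a node of colour $c$ pick a \emph{new} colour out of a combinatorial design attached to $c$, we force the new colour to be an element of the node's list. The two rounds will be used as follows. In the first round every node $v$ broadcasts the pair $(L_v,c_v)$, which is $l+1$ colours. Fix a prime power $q=2^{\Theta(l/\beta^2)}=\poly(\beta,\log|\mathcal C|,\log m)$, chosen large enough to satisfy the inequalities below, and identify (after harmlessly padding $\mathcal C$ and $[m]$ to powers of $q$, which changes $\log|\mathcal C|,\log m$ only by $O(\log q)$) each colour $c\in\mathcal C$ with a distinct polynomial $\pi_c$ over $\mathbb F_q$ of degree $<d:=\lceil\log_q|\mathcal C|\rceil$ and each input colour $i\in[m]$ with a distinct polynomial $\rho_i$ of degree $<e:=\lceil\log_q m\rceil$. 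After round one, $v$ knows $\{(L_u,c_u):u\in N(v)\}$; it then computes an ``evaluation point'' $x_v$ (plus a ``bucket'' $y_v$, see below), forms a short \emph{reduced colour} $\gamma_v$ encoded in $l/\beta^2$ bits, broadcasts $\gamma_v$ in the second round, and finally outputs a colour $\phi(v)\in L_v$ computed from $L_v$, $\gamma_v$, and $\{\gamma_u:u\in N^{out}(v)\}$.

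\textbf{The reduced colour and the output.} Using only the input colours, $v$ chooses $x_v\in\mathbb F_q$ that \emph{Linial‑separates} $v$: $\rho_{c_v}(x_v)\notin\{\rho_{c_u}(x_v):u\in N^{out}(v)\}$; since two distinct polynomials of degree $<e$ agree on $<e$ points, at most $\beta e<q/2$ points are forbidden, so there are $\ge q/2$ admissible choices of $x_v$. Among these, $v$ picks one for which the partition of $L_v$ into fibres $L_v\cap\pi^{-1}(\cdot)$ at $x_v$ is suitably ``spread out'', and lets $y_v$ index a large fibre; set $\gamma_v:=\bigl(x_v,\rho_{c_v}(x_v),y_v\bigr)$, which is $3\log q=\Theta(l/\beta^2)$ bits. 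Node $v$ then outputs
\[
 \phi(v)\ :=\ \text{any } c\in L_v \text{ with } \pi_c(x_v)=y_v \text{ and } \pi_c(x_u)\neq y_u \text{ for all } u\in N^{out}(v).
\]
Correctness of this rule is clean: for an edge oriented $u\to v$ we have $v\in N^{out}(u)$, so by construction $\pi_{\phi(u)}(x_v)\neq y_v$, whereas $\pi_{\phi(v)}(x_v)=y_v$; hence $\pi_{\phi(u)}\neq\pi_{\phi(v)}$ and $\phi(u)\neq\phi(v)$. The middle coordinate $\rho_{c_v}(x_v)$ of $\gamma_v$ is included to disambiguate the degenerate case $x_u=x_v$: there the separation property of $x_u$ (resp.\ $x_v$) forces $\rho_{c_u}(x_u)\neq\rho_{c_v}(x_v)$, so $\gamma_u\neq\gamma_v$, and one checks that the two fibres involved are then either disjoint or need no mutual avoidance. (Making this last point airtight is easiest by letting $\pi_c$ take values in $\mathbb F_q^{2}$ and bucketing by the pair, adjusting $q$ and $d$ accordingly.)

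\textbf{The crux and the list‑size bound.} Everything above is routine; the real content is showing that the set in the definition of $\phi(v)$ is nonempty, i.e.\ that some colour of $L_v$ lying in $v$'s fibre survives the removal of the at most $\beta$ ``forbidden'' sets $\{c:\pi_c(x_u)=y_u\}$. The large fibre has size $\gtrsim l/q$, and a naive bound shows each forbidden set meets it in a $\approx 1/q$ fraction, which would already suffice since $q\gg\beta$; the difficulty is that an adversarial list (e.g.\ one concentrated on polynomials sharing a few common evaluations) can make a single out‑neighbour cover an entire fibre, and $v$ cannot see the $x_u$'s when it commits to $x_v,y_v$. Overcoming this—essentially by exploiting the freedom in choosing $x_v$ among the $\ge q/2$ admissible points, together with a bound on how ``concentrated'' a set of $l$ low‑degree polynomials can be at many points simultaneously (and, if needed, committing to evaluations at a few points rather than one)—is the main obstacle, and it is exactly here that the requirement $|L_v|\ge l=4e\beta^2(4\log\beta+\log\log|\mathcal C|+\log\log m+8)$ is forced: $q$ must be polynomial in $\beta$ and $\log m$ (for the separation step) and large enough relative to $\log|\mathcal C|$ (so that $d$ is small and fibres stay large), which pins down $\log q=\Theta(\log\beta+\log\log|\mathcal C|+\log\log m)$ and hence both the list size ($l=\Theta(\beta^2\log q)$) and the round‑two message length ($l/\beta^2=\Theta(\log q)$ bits). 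Finally, the degenerate instantiation $\beta=\Delta$, $\mathcal C=L_v=[O(\Delta^2\log m)]$, with the second round suppressed, recovers Linial's classical $O(\Delta^2\log m)$‑colour reduction, so this sits inside the announced common framework.
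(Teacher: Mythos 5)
Your write-up openly defers the one step that carries all the difficulty, so as it stands it is not a proof but a plan with a hole at its center. The correctness of the two-round protocol and the message accounting are indeed routine; what is not routine is exactly the claim you label ``the crux'': that after $v$ commits to $(x_v,y_v)$, some colour of $L_v$ in the chosen fibre survives the removal of the $\beta$ sets $\{c:\pi_c(x_u)=y_u\}$. You yourself exhibit the failure mode (a list concentrated on polynomials sharing a few evaluations lets a single out-neighbour wipe out an entire fibre, and $v$ cannot see $x_u,y_u$ when it commits), and the proposed remedies --- ``exploiting the freedom in choosing $x_v$'', ``a bound on how concentrated $l$ low-degree polynomials can be'', ``committing to evaluations at a few points'' --- are not carried out, nor is it shown that they force the stated list size $l=4e\beta^2(4\log\beta+\log\log|\mathcal C|+\log\log m+8)$; that bound is asserted by dimensional analysis rather than derived. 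This is a genuine gap, not a routine detail: the whole reason list coloring resists Linial-style constructions is that one cannot fix a single globally known cover-free family with $C_{x,L}\subseteq L$ for every list $L$, and the paper's discussion explicitly notes that the known polynomial constructions for list coloring (Barenboim's) only work at much smaller outdegree than $\beta\approx\sqrt{l}$, so there is no reason to believe the fibre/evaluation-point scheme closes at your parameters.

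The paper avoids this obstacle by a different mechanism: instead of committing to one fibre, each node commits (with \emph{zero} communication) to a large collection $K_v$ of candidate $k$-subsets of $L_v$ (problem $P_2$), obtained by filtering out candidates that are ``universally bad'' --- a property depending only on $L_v$ itself, thanks to a symmetry/isomorphism argument between ${X\choose k}$-type systems for different lists of equal size --- and then invoking a greedy zero-round conflict-coloring lemma in which each of the $m\cdot|\mathcal F|$ possible (input colour, list) types is assigned a non-conflicting candidate offline. The first communication round is then only used to learn neighbours' types and prune $K_v$ down to a single low-intersecting sublist $C_v$, and the second to pick a colour outside the out-neighbours' sublists; the doubly exponential gap between list size and conflict degree (Lemmas on $l_2/d_2$) is what makes the offline greedy assignment possible and is where the $\log\beta+\log\log|\mathcal C|+\log\log m$ term is actually earned. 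If you want to salvage your route, you would need to prove the concentration statement for low-degree polynomials against adversarial lists at outdegree $\beta=\Theta(\sqrt{l/\log q})$, which is precisely what the paper's open discussion suggests is not currently known.
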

The name ``Linial for Lists'' stems from the fact that Thm.~\ref{thm:2rounds} is a ``list version'' of one of the cornerstones of distributed graph coloring, Linial's color reduction, which says that an $m$-coloring can be reduced to a $(5\Delta^2\log m)$-coloring in a single round~\cite{linial92}. Moreover, our framework is itself a \emph{natural generalization of  Linial's approach of  cover-free set families}. Applied to equal lists, it yields an alternative proof of Linial's color reduction, 
 in the form of a \emph{greedy construction of cover-free families} (Linial proved their existence using the probabilistic method \cite{linial92}; he also used an alternative construction from~\cite{EFF85} via polynomials over finite fields, which however yields a weaker color reduction for $m\gg\Delta$)  (see Sec.~\ref{sec:Linial} and Sec.~\ref{sec:discussion}).

Compared with Thm.~\ref{thm:FHK}, we lose colorspace independence, and our algorithm does not extend to general local conflict coloring (although we use a kind of conflict coloring in the process). In exchange, we eliminate $\Delta$ from the bound on the list size, reduce the runtime to exactly 2 rounds, and dramatically reduce message size. This is achieved by a  non-trivial paradigm shift in the local conflict coloring framework (see the technical overview below).
The runtime cannot be reduced to $1$ round, due to a lower bound of \cite{SzegedyV93} (see Sec.~\ref{sec:discussion}).

Combining Thm.~\ref{thm:2rounds} with the partitioning algorithm of~\cite{BEG18} as outlined above gives us a $(deg+1)$-list coloring algorithm. Note that any improvement in the list size bound in Thm.~\ref{thm:2rounds} (with little increase in runtime) would yield a faster $(deg+1)$-list coloring algorithm. 
\begin{theorem}[$(deg+1)$-List Coloring]
\label{thm:mainListColoring}
In a graph with max. degree $\Delta$, $(deg+1)$-list coloring with lists $L_v\subseteq \mathcal{C}$ from a color space  of size $|\mathcal{C}|=2^{\poly(\Delta)}$ \footnote{We use the notation $\poly(X)=O(X^c)$, for an absolute constant $c$, and $\tilde{O}(X)=X\cdot \poly(\log X)$.} can be solved in $O\big(\sqrt{\Delta\log\Delta}\big)+\frac{1}{2}\cdot\logstar n$ rounds in \LOCAL.
Furthermore, each node only needs to broadcast to its neighbors a single non-\CONGEST message consisting of a subset of its list.
\end{theorem}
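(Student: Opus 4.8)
The plan is to run the ``partition-then-list-color'' recursion sketched after Thm.~\ref{thm:FHK}, with Thm.~\ref{thm:2rounds} as the coloring engine in place of Thm.~\ref{thm:FHK}. First I would compute, once and for all, an $O(\Delta^2)$-coloring of $G$ in $\frac{1}{2}\logstar n+O(1)$ rounds (Szegedy--Vishwanathan~\cite{SzegedyV93}, or by iterating Linial~\cite{linial92}); call $m=O(\Delta^2)$ its number of colors. From now on every call to the low-outdegree partitioning algorithm of~\cite{BEG18} and to Thm.~\ref{thm:2rounds} is handed this coloring (restricted to whatever subgraph is current), so none of them incurs a further $\logstar n$ term: partitioning a graph of max degree $d$ into $O(d/\beta)$ subgraphs of outdegree $\beta$ then costs $O(d/\beta)$ rounds, the $\logstar n$ in its stated complexity being only the cost of an internal initial coloring. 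The recursion acts on the subgraph $G_j$ induced by the currently uncolored vertices, with $G_0=G$ and $\Delta_0=\Delta$; one level turns $G_j$ (max degree $\Delta_j$) into $G_{j+1}$ with $\Delta(G_{j+1})<\Delta_j/2$.

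At level $j$ I set $\beta_j=c\sqrt{\Delta_j/\log\Delta}$ for a small absolute constant $c$, so that the list threshold $l_j=4e\beta_j^2(4\log\beta_j+\log\log|\mathcal{C}|+\log\log m+8)$ of Thm.~\ref{thm:2rounds} satisfies $l_j\le\Delta_j/2$. This is exactly where the hypothesis $|\mathcal{C}|=2^{\poly(\Delta)}$ enters: it forces $\log\log|\mathcal{C}|=O(\log\Delta)$, and $m=O(\Delta^2)$ forces $\log\log m=O(\log\log\Delta)$, so the parenthesis is $O(\log\Delta)$ at every level and $l_j=O(\beta_j^2\log\Delta)=O(c^2\Delta_j)\le\Delta_j/2$ for $c$ small enough. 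Now partition $G_j$ into $p_j=O(\Delta_j/\beta_j)=O(\sqrt{\Delta_j\log\Delta})$ subgraphs of outdegree $\beta_j$ and process them one at a time: when a subgraph $H$ is processed, every still-uncolored vertex first deletes from its list the colors used by its (already colored) neighbors, and then the vertices of $H$ whose refined list still has size $\ge l_j$ jointly invoke Thm.~\ref{thm:2rounds} on the subgraph they induce --- which has outdegree $\le\beta_j$, inherits the $m$-coloring, and has all lists $\ge l_j$ --- legally coloring all of them in $2$ rounds. The invariant ``$|L_v|$ strictly exceeds the number of uncolored neighbors of $v$'' holds throughout: it holds at the start because $|L_v|\ge\deg(v)+1$, and coloring a neighbor of $v$ decreases $|L_v|$ by at most $1$ and the uncolored-neighbor count by exactly $1$, so the gap never shrinks. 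Since lists only shrink, a vertex still uncolored after its subgraph was processed had refined list size $<l_j\le\Delta_j/2$ at that time, hence forever after; so by the invariant every uncolored vertex has fewer than $\Delta_j/2$ uncolored neighbors, i.e.\ $\Delta(G_{j+1})<\Delta_j/2$, and the shrunk lists are still $(\deg+1)$-lists for $G_{j+1}$.

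We iterate this for $O(\log\Delta)$ levels, until the max degree drops to $\Theta(\log\Delta)$; below that, the fixed $\log\log|\mathcal{C}|$ term in $l_j$ prevents $l_j\le\Delta_j/2$, so Thm.~\ref{thm:2rounds} is no longer applicable. On the remaining graph $G^*$ --- max degree $\Delta^*=O(\log\Delta)$, still carrying $(\deg+1)$-lists and the inherited $m$-coloring --- I finish classically: a single Linial color reduction turns the $m$-coloring into a coloring with $O((\Delta^*)^2\log m)=\poly(\log\Delta)$ colors, and then we sweep through those color classes, letting each class greedily pick legal colors (possible since each refined list is nonempty, by the invariant), for $\poly(\log\Delta)$ rounds in all. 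Summing up: $\frac{1}{2}\logstar n+O(1)$ for the initial coloring, $\sum_j O(\sqrt{\Delta_j\log\Delta})=O(\sqrt{\Delta\log\Delta})$ for the recursion (geometric series in $\Delta_j<\Delta/2^j$), and $\poly(\log\Delta)$ for the base case (which is $O(\sqrt{\Delta\log\Delta})$ for large $\Delta$ and $O(1)$ for bounded $\Delta$) --- that is, $O(\sqrt{\Delta\log\Delta})+\frac{1}{2}\logstar n$ total. For message size, the partitioning of~\cite{BEG18} and the Linial reductions use $O(\log n)$-bit messages; the only communication exceeding that is a vertex broadcasting a subset of its own list --- the first-round message of Thm.~\ref{thm:2rounds}, of which announcing a chosen color is a size-$1$ instance --- matching the claimed ``subset of its list'' broadcast.

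I expect the main obstacle to be the interaction between colorspace size and the reach of the recursion: since $\log\log|\mathcal{C}|$ enters $l_j$ additively and never shrinks with $\Delta_j$, the degree-halving recursion stalls at degree $\Theta(\log\Delta)$ rather than at a constant, so one must argue that stopping there is asymptotically free and plug in a separate, fast enough finishing routine for the low-degree remainder. The rest is careful bookkeeping: fixing $c$ so that $l_j\le\Delta_j/2$ and $\beta_j\ge1$ hold simultaneously at every level, confirming that reusing the single global $O(\Delta^2)$-coloring keeps all the $\log\log m$ terms harmless throughout, and checking that the per-phase list refinement keeps the partial coloring proper across phases and across recursion levels.
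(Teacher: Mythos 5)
Your proposal is correct and follows essentially the same route as the paper's proof: the same initial $O(\Delta^2)$-coloring, the same low-outdegree partition of Lemma~\ref{lem:lowOutDegree} with $\beta_j\approx\sqrt{\Delta_j/\log\Delta}$, the same invocation of Thm.~\ref{thm:2rounds} on nodes whose refined lists exceed the threshold, and the same degree-halving invariant, including the key observation that the $\log\log|\mathcal{C}|$ term forces a separate finishing phase. The only (harmless) deviation is the base case: you stop at degree $\Theta(\log\Delta)$ and finish with one Linial reduction to $\poly(\log\Delta)$ colors plus a $\poly(\log\Delta)$-round greedy sweep, while the paper stops at degree $\Delta^{1/4}$ and sweeps $O(\sqrt{\Delta})$ color classes; both are dominated by the $O(\sqrt{\Delta\log\Delta})$ main term.
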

The bound on the color space size stems from the color space dependence in~\Cref{thm:2rounds}. 
As discussed in Sec.~\ref{sec:discussion}, it is possible to trade color space dependence with runtime in~\Cref{thm:2rounds}, which could improve or suppress the bound in~\Cref{thm:mainListColoring}. That, however, comes with the cost of having huge messages. 

\Cref{thm:mainListColoring} immediately provides the fastest known truly local $(\Delta+1)$-coloring algorithm in \LOCAL. Below we list further implications of our framework.  
\begin{itemize}
\item \textbf{\CONGEST (see Cor.~\ref{cor:deltaplus1}):} We obtain an improved $(\Delta+1)$-coloring algorithm in a \emph{low degree} regime in \CONGEST. In particular, if $\Delta=\tilde{O}(\log n)$ then $(\Delta+1)$-coloring (more generally, $(deg+1)$-list coloring with colorspace of size $|\mathcal{C}|=\poly(\Delta)$) can be solved in $\tilde{O}(\sqrt{\Delta})+\frac{1}{2}\cdot\logstar n$ rounds in \CONGEST.  
Generally, if one allows messages of size $B$, this runtime  holds for degree up to $\Delta=\tilde{O}(B)$. 
On the other hand, if $\Delta=\Omega(\log^{2+\eps}n)$, for an arbitrarily small constant $\eps>0$, an algorithm from recent work~\cite{Kuhn20}  achieves runtime ${O}(\sqrt{\Delta})$ in \CONGEST (if one recasts their dependency on $n$ as a $\Delta$-dependency). Thus, only for the  regime of $\Delta \in \Omega(\log^{1+\epsilon} n)\cap O(\log^{2+\epsilon} n)$ 
we do not have an algorithm with runtime $\tilde{O}(\sqrt{\Delta})$ in  \CONGEST (with the current best being $O(\Delta^{3/4}+\logstar n)$ due to~\cite{barenboim16}).

\item \textbf{Defective list coloring (see Thm.~\ref{thm:listDefective}):} 
Our framework extends to \emph{$d$-defective list coloring}, that is, list coloring where each node $v$ can have at most $d$ neighbors with the same colors as $v$: If lists are of size $\Omega\big((\Delta/(d+1))^2 \cdot (\log\Delta + \log\log|\mathcal{C}| + \log\log m)\big)$ we can compute a $d$-defective list coloring in $2$ rounds in \LOCAL. 
The result can be seen as the ``list variant'' of a defective coloring result in~\cite{Kuhn2009WeakColoring}. 
 While we are not aware of an immediate application, defective list coloring with a better ``colors vs. defect'' tradeoff ($d$ vs. $O(\Delta/d)$) for \emph{line graphs} has recently been used to obtain a \emph{edge-coloring} algorithm with complexity $\quasipolylog \Delta + O(\logstar n)$~\cite{BKO20}.

\item \textbf{$\Delta$-coloring:} The improvements obtained in Thm.~\ref{thm:mainListColoring} also imply respective improvements for several \emph{$\Delta$-coloring} algorithms that use $(\deg+1)$-list coloring as a subroutine~\cite{GHKM18}. 
\end{itemize}

\subsection{Technical Overview}
At their core, the proofs of Theorems~\ref{thm:FHK} and~\ref{thm:2rounds} 
are based on three important concepts: \emph{conflict coloring}, \emph{problem amplification} and \emph{0-round solvability}.  
A \emph{conflict coloring problem} is a list coloring problem where two colors can conflict even if they are not equal. The associated \emph{conflict degree} is the maximum number of conflicts per color a node can have. 
\emph{Problem amplification} transforms one conflict coloring problem instance into another, as follows: given an input to a problem $A$, each node computes its input to another problem $B$ (perhaps by exchanging information along the way), with the property that, 1. having a solution to $B$, a simple one round algorithm computes a solution to $A$, and 2. the list-size-to-conflict-degree ($l/d$) ratio of $B$ is larger than that of $A$. Note that the first property essentially determines the conflicts in $B$, and usually a color in $B$ is a set of colors in $A$. The importance of the second property stems from the concept of \emph{0-round solvability}:  an instance of a problem $B$ with large enough $l/d$ ratio can be solved in 0 rounds, i.e., with no communication. 

From here, the plan is simple: take problem $P_0$, which is the list coloring problem, recast it as a conflict coloring problem, and amplify it into problems $P_1,\dots,P_t$, so that $P_t$ is 0-round solvable. Then we can cascade down to a solution of problem $P_0$, in $t$ rounds. Crucially, in order to do the above, we need $P_0$ to have sufficiently large $l/d$ ratio to begin with (which explains the particular list size requirements in our theorems). The input $m$-coloring is used for tie-breaking in the 0-round solution of $P_t$.

In~\cite{FHK}, \emph{local conflict coloring} is the main problem type, where the conflict between two colors depends on who the colors belong to, i.e., two colors can conflict along one edge of the graph and not conflict on another one. Their framework allows solving any local conflict coloring problem, and by re-modeling a problem with an arbitrary colorspace via mapping each list to an interval $[1,l]$ of natural numbers, one can redefine local conflicts and ``forget'' about the real size of the colorspace (hence  colorspace independence).  When computing the input of $P_i$ (given $P_{i-1}$), in order to maintain manageable conflict degree, \emph{nodes exchange messages} to filter out colors in $P_i$ that cause too much conflict with any neighboring node.  These messages are huge (recall that a color in $P_i$ is a set of colors in $P_{i-1}$). Thus, the input to $P_i$ is usually the topology, $P_0$-lists and conflicts in the $i$-hop neighborhood of a node. The goal towards 0-round solvability is then to find a problem $P_t$ whose $l/d$ ratio is larger than the number of all $t$-hop neighborhood patterns (i.e., inputs). The complicated nature of the input to $P_t$ also makes the 0-round solvability proof rather conceptually involved. The number $t$ of problems required is about $3\log^*(m+\Delta)$.

Our framework, on the other hand, is based on special \emph{global conflict coloring} instances, where the conflict relation of two colors does not depend on the edge across which they are. This limits us to solving only ordinary list problems $P_0$.  Our key insight (see~\Cref{sec:zeroRound}), which sets Theorems~\ref{thm:FHK} and~\ref{thm:2rounds} apart, is that in our setting \emph{nodes do not need to communicate} for computing the input to problems $P_1,\dots,P_t$. To achieve this, we show that when forming the lists for $P_i$ from the input to $P_{i-1}$, it suffices to drop ``universally bad'' colors (sets of colors in $P_{i-1}$), whose absence is enough to ensure moderate conflict degree towards any (!) other node. We achieve this by crucially exploiting the symmetry of the particular conflict coloring problems arising from ordinary list coloring. 

Thus, the input of a node in $P_t$ is just its input in $P_0$. This makes the 0-round solution (of $P_t$) particularly simple. The only communication happens when we cascade down from a solution of $P_t$ to that of $P_0$. With $t=2$, we get our main theorem. Since here we have only two problems, the message size is limited (the first round is needed to \emph{learn the $P_0$-lists} of neighbors, while the second one consists of a   \emph{small auxiliary message}). Taking larger $t$ would reduce the requirement on the initial list size but increase message size (see \Cref{sec:discussion}). Since $t=2$ is sufficient for our applications, we limit our exposition to that case. Setting $t=1$ does not give anything non-trivial for list coloring, since the $l/d$ ratio is not large enough, but when all $P_0$-lists are equal, it gives \emph{an alternative proof of Linial's color reduction}  (\Cref{sec:Linial}).
In fact, $P_1$ is essentially the problem of finding a \emph{low intersecting set family}, 
which Linial's algorithm is based on, while $P_2$ is a ``higher-dimensional'' variant of it. Thus, at the core of our result there is an (offline) construction of certain set families over the given color space: given those, the algorithm is easy. This way, we believe our paper also provides a deeper insight into the framework  
of Thm.~\ref{thm:FHK}. Our result can also be seen as a bridge between the results of~\cite{FHK} and the recently popular concept of speedup (see \Cref{sec:discussion}).

\subsection{Further Related Work}
\label{ssec:furtherRelated}
Most results on distributed graph coloring until roughly 2013 are covered in the excellent monograph by Barenboim and Elkin \cite{barenboimelkin_book}. An overview of more recent work can be found in \cite{Kuhn20}. Due to the large volume of published work on distributed graph coloring, we limit this section to an informative treatment of a selected subset.
While we have covered most literature on truly local vertex coloring algorithms, there are many known algorithms that trade the high $\Delta$-dependence in the runtime with lower $n$-dependence. All \textbf{deterministic} algorithms in this category (for general input graphs) involve a $\Omega(\log n)$ factor. 
From the early 90s until very recently, the complexity of $(\deg+1)$-list coloring (and $(\Delta+1)$-coloring) in terms of $n$ was $2^{O(\sqrt{\log n})}$ \cite{awerbuch89,panconesi1992improved}, with  algorithms based on \emph{network decomposition} (into small diameter components). 
A recent breakthrough in network decomposition algorithms~\cite{RG19} 
brought the runtime of $(deg+1)$-list coloring down to $\polylog n$ in \LOCAL (it also applies to many other symmetry breaking problems; see \cite{RG19,SLOCAL17,FOCS18-derand}). A little later, \cite{BKM19} found a $\polylog n$ round \CONGEST algorithm. 

Historically, decompositions into subgraphs that are equipped with low outdegree orientations as used in our results, in \cite{FHK}, and in \cite{barenboim16} are closely related to the notion of arboricity. To the best of our knowledge, \cite{BE10sublog} was the first paper to introduce low out-degree orientations as a tool for distributed graph coloring. First, they showed that one can compute $O(a)$-outdegree orientations in graphs with arboricity $a$ in $O(\log n)$ rounds, and used it to devise several algorithms to color graphs with bounded arboricity. \cite{BE10sublog} is also the first paper to notice that the degree bound of $\Delta$ in Linial's color reduction can be replaced with a bound on the outdegree.
Then,  \cite{barenboimE10} devised methods to recursively partition into graphs with small arboricity yielding an $O(\log \Delta \log n )$-round algorithm for $O(\Delta^{1+\eps})$-coloring and an $O(\Delta^{\eps}\log n)$-round algorithm for $O(\Delta)$ coloring. 
Recently, this recursive technique was extended to $(\deg+1)$-list coloring, giving a $(2^{O(\sqrt{\log \Delta})}\log n)$-round algorithm~\cite{Kuhn20}; the runtime of \cite{Kuhn20} has a hidden dependence on the color space. While \cite{BE10sublog,barenboimE10,Kuhn20} have an inherent $O(\log n)$-factor in their runtime,  \cite{barenboim16} showed that one can  decompose a graph into small arboricity subgraphs (equipped with a small outdegree orientation) without inferring a $O(\log n)$ factor, yielding the first sublinear in $\Delta$ algorithm for $\Delta+1$ coloring. In the aftermath, \cite{BEG18} improved the runtime for computing the underlying decompositions (and also simplified the algorithm).  Thus, the best forms of our results, \cite{FHK} and \cite{barenboim16} are obtained by using \cite{BEG18} to compute  decompositions into subgraphs of small arboricity (equipped with small outdegree orientations). 

Note that our results,  \cite{FHK} and \cite{barenboim16} only require a bound on the outdegree of the subgraphs' orientations and are oblivious to their arboricity. While bounded outdegree in a graph with a given orientation implies bounded arboricity, computing a bounded outdegree orientation in a graph with bounded arboricity requires $\Omega(\log n)$ rounds, as shown in \cite{BE10sublog}.

Recent \textbf{randomized} coloring algorithms rely on the \emph{graph shattering} technique \cite{BEPSv3}. 
 In the \emph{shattering phase}, a randomized algorithm computes a partial coloring of the graph, after which every uncolored connected component of the graph has small size (say, $\polylog n$).
Then, in the \emph{post-shattering} phase, deterministic $(\deg+1)$-list coloring is applied on all uncolored components in parallel.
The runtime of the shattering phase has progressed from $O(\log \Delta)$ \cite{BEPSv3}, over $O(\sqrt{\log \Delta})$ \cite{HsinSu18} to $O(\logstar \Delta)$ \cite{chang2017optimal}. Combined with the 
$\polylog n$-round list coloring algorithm of \cite{RG19}, this gives the current best runtime $\poly\log\log n$, for $(\Delta+1)$-coloring \cite{chang2017optimal}, and $O(\log\Delta)+\poly\log\log n$, for $(deg+1)$-list coloring \cite{BEPSv3}.

While special graph classes are out of the scope of this paper, we mention the extensively studied case of distributed \textbf{edge coloring}.
Here, $\polylog n$-round algorithms were designed for progressively improving number of colors, from $(2+\eps)\Delta$ \cite{GS17,GHKMSU17} to $(2\Delta-1)$ \cite{FGK17,HarrisEdge19}, then  to $(1+\eps)\Delta$  \cite{GKMU17,HarrisEdge19,SuVu19}. 
The truly local complexity of $(2\Delta-1)$-edge coloring has improved from $O(\Delta)$  
\cite{panconesi-rizzi} to  $2^{O(\sqrt{\log \Delta})})$ \cite{Kuhn20} then to   $\quasipolylog \Delta$ \cite{BKO20} (in addition to $O(\logstar n)$). $O(\Delta^{1+\eps})$-edge colorings can be computed in $O(\log \Delta+\logstar n)$ rounds \cite{BE11_neighborhoodInd}.

 Little is known on coloring \textbf{lower bounds} (in contrast to other symmetry breaking problems, e.g., maximal matching, MIS or ruling sets \cite{kuhn16_jacm,FOCS19MIS,balliu2020ruling}). 
  Linial's $\Omega(\logstar n)$ lower bound is extended to randomized algorithms in \cite{Naor91}. 
 The deterministic bound has recently been re-proven in a  topological framework \cite{fraigniaud2020topology}. 
A $\Omega(\Delta^{1/3})$ lower bound for $O(\Delta)$-coloring holds in a weak variant of the LOCAL model  \cite{disc16_coloring}. 
 Several works 
 characterized coloring algorithms which can only spend a single communication round \cite{SzegedyV93,KuhnW06,disc16_coloring}. None of these results gives anything non-trivial for two rounds. 
Also, the \emph{speedup} technique (e.g.,  \cite{Brandt19speedup,brandt2016LLL,FOCS19MIS,br2020truly,balliu2020ruling,balliu2019classification}), which proved very successful for MIS lower bounds, is poorly understood for graph coloring. We briefly discuss the technique and its relation to our result in Sec.~\ref{sec:discussion}.
There are lower bounds for 
more restricted variants of coloring. 
There is a $\Omega(\log n)$ ($\Omega(\log\log n)$) lower bound for deterministic \cite{brandt2016LLL} (randomized \cite{chang16exponential}) $\Delta$-coloring, as well as for $(\Delta-2)$-defective $2$-coloring \cite{BHKOS19}. 
Further, \cite{greedycoloring}  provides a $\Omega(\log n/\log\log n)$ lower bound for greedy coloring. Similar bounds 
hold for coloring trees and bounded arboricity graphs with significantly fewer than $\Delta$ colors \cite{linial92,BE10sublog}.

\subsection{Roadmap}
\Cref{sec:zeroRounds} introduces our version of conflict coloring 
together with the 0-round solvability lemma. 
\Cref{sec:notition} defines the problems $P_0$ and $P_1$ and provides further notation. 
\Cref{sec:Linial} contains the first result of our framework: an alternative proof of Linial's algorithm. 
\Cref{thm:2rounds} (Linial for Lists) is proved in \Cref{sec:versionTwo}.
\Cref{thm:mainListColoring} ($(deg+1)$-list coloring) is proved in \Cref{sec:coloring}. 
\Cref{thm:listDefective} (Defective list coloring) is proved in \Cref{sec:listdefective}. 
We conclude with a discussion of the results and open problems in \Cref{sec:discussion}.

\section{Basic Setup and Linial's Color Reduction}

In this section, we first introduce the conflict coloring framework that is the basis of our algorithm, then we show how it quickly implies an alternative variant of Linial's color reduction algorithm.
For a set $S$ and an integer $k\ge 0$, let $\Pow(S)$ and ${S\choose k}$ denote the set of all subsets and all size-$k$ subsets of $S$, respectively. For a map $f$ we use $f^{(i)}$ to denote the $i$-fold application of $f$, e.g., $\Pow^{(2)}(S)=\Pow(\Pow(S))$.

\subsection{Global Conflict Coloring}
\label{sec:zeroRounds}

 A \emph{list family} $\mathcal{F}\subseteq \Pow(\mathcal{C})$  is a set of subsets of a color space $\mathcal{C}$. Given a \emph{symmetric conflict relation} ${\mathcal{R}}\subseteq \{\{c,c'\} \mid c,c'\in \mathcal{C}\}$, the \emph{conflict degree} of a family  $\mathcal{F}$ in ${\mathcal{R}}$ is  the maximum number of colors in a list $L$ that conflict with a single color in a list $L'$ (possibly same as $L$), i.e.\ , $d_{\mathcal{R}}(\mathcal{F})=\max_{L,L'\in \mathcal{F}, c\in L}|\{c'\in L' \mid \{c,c'\}\in {\mathcal{R}}\}|$.
An instance $\PP=(\mathcal{C},{\mathcal{R}},\mathcal{F}, \mathcal{L})$ of the \emph{global conflict coloring problem} 
on the graph $G$ is given\footnote{Formally, $G$ is also part of the problem, but we omit it since it is always clear from the context. These definitions crucially differ from \underline{local} conflict coloring in~\cite{FHK}, where a pair of colors can conflict along one edge and not conflict along another. } by a color space $\mathcal{C}$, a symmetric conflict relation ${\mathcal{R}}$ on $\mathcal{C}$, a list family $\mathcal{F}$, and an assignment $\mathcal{L}:V\rightarrow \mathcal{F}$ of lists $\mathcal{L}(v)\in \mathcal{F}$ of colors to each vertex $v$. 
The goal is to assign each vertex a color from its list such that no pair of neighboring vertices get conflicting colors $\{c,c'\}\in {\mathcal{R}}$. 
The \emph{conflict degree} of $\PP$ is $d_{\mathcal{R}}(\mathcal{F})$. Note that the conflict degree \emph{does not depend on $G$ or $\mathcal{L}$}.

\begin{lemma}[Zero Round Solution]\label{l:zeroTypes} 
An instance $(\mathcal{C},{\mathcal{R}},\mathcal{F}, \mathcal{L})$ of the conflict coloring problem on a graph $G$ can be solved without communication if $G$ is $m$-colored, $m,{\mathcal{R}},\mathcal{F}$ are globally known, and every list in $\mathcal{F}$ has size at least 
$l > m\cdot |\mathcal{F}| \cdot d_{\mathcal{R}}(\mathcal{F})~$.
\end{lemma}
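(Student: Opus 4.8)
The plan is to solve the problem greedily and locally, using the input $m$-coloring to impose a consistent priority order. Fix a global ordering of $V$ by: first by the $m$-color of a vertex, then (within a color class) by unique identifier. Since an $m$-coloring is proper, adjacent vertices never share an $m$-color, so this order restricted to any edge is determined purely by the two endpoints' $m$-colors — in particular, a vertex $v$ can determine for each neighbor $u$ whether $u$ precedes or follows $v$ knowing only $u$'s $m$-color, which it does not even need to learn if we instead let each vertex act based on its own $m$-color alone. The key point is that we want a \emph{zero-round} algorithm, so $v$ must pick its color knowing only its own list $\mathcal{L}(v)$, its own $m$-color, and the globally known data $m, \mathcal{R}, \mathcal{F}$.

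First I would set up the counting argument that makes a purely local choice possible. Think of the color selection as a function; we want to exhibit, for every $m$-color class $i \in \{1,\dots,m\}$ and every list $L \in \mathcal{F}$, a choice $\phi(i, L) \in L$ such that whenever $u,v$ are adjacent with $m$-colors $i < j$ and lists $L_u, L_v$, the colors $\phi(i,L_u)$ and $\phi(j,L_v)$ do not conflict. If such a $\phi$ exists and is globally known (it is a function of the globally known data $m,\mathcal{R},\mathcal{F}$ only), then each vertex simply outputs $\phi$ applied to its own $m$-color and list — zero communication. To build $\phi$, process the $m$ color classes in increasing order $i = 1, 2, \dots, m$; when processing class $i$, choose for \emph{each} list $L \in \mathcal{F}$ a color $c_L \in L$ avoiding all ``forbidden'' colors, where a color $c \in L$ is forbidden if there exists some earlier class $i' < i$ and some list $L' \in \mathcal{F}$ with $c$ conflicting with the already-chosen color $\phi(i', L')$.

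The heart of the argument is the pigeonhole count showing a valid $c_L$ always exists. The number of (class, list) pairs already processed is at most $(i-1)\cdot|\mathcal{F}| < m \cdot |\mathcal{F}|$, each contributing one chosen color $c'$; and for a fixed list $L$, the number of colors in $L$ conflicting with a single given color $c'$ is at most $d_{\mathcal{R}}(\mathcal{F})$ by definition of the conflict degree. Hence the number of forbidden colors in $L$ is at most $(i-1)\cdot|\mathcal{F}|\cdot d_{\mathcal{R}}(\mathcal{F}) < m\cdot|\mathcal{F}|\cdot d_{\mathcal{R}}(\mathcal{F}) < l \le |L|$, so some legal color remains. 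Crucially the bound is uniform over $L$, so the choices for all lists at stage $i$ can be made simultaneously and independently; and since we only ever forbid based on colors chosen at \emph{strictly earlier} stages, for an edge $uv$ with $m$-colors $i<j$ the color $\phi(j,L_v)$ was chosen to avoid conflict with $\phi(i,L_u)$, giving a proper conflict coloring. The main obstacle — really the only subtlety — is making sure the ``forbidden'' set is defined so that (a) it is symmetric-enough to handle an edge regardless of which endpoint has the smaller $m$-color (handled by having the later class avoid conflicts with all earlier-chosen colors, and noting $\mathcal{R}$ is symmetric), and (b) the count stays below $l$ even though we must make a choice for every list in $\mathcal{F}$ at every one of the $m$ stages; the strict inequality $l > m\cdot|\mathcal{F}|\cdot d_{\mathcal{R}}(\mathcal{F})$ is exactly what is needed and leaves no slack to spare, which is why the hypothesis is stated with a strict ``$>$''.
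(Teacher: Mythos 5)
Your proposal is correct and follows essentially the same route as the paper: a communication-free greedy assignment of a color to each pair (input $m$-color, list in $\mathcal{F}$), with the pigeonhole bound $l > m\cdot|\mathcal{F}|\cdot d_{\mathcal{R}}(\mathcal{F})$ guaranteeing a non-conflicting choice at every step, after which each vertex locally outputs the color assigned to its own pair. The only cosmetic difference is that you process the $m$ classes in order and forbid only cross-class conflicts (exploiting that adjacent vertices have distinct $m$-colors), whereas the paper orders all $m\cdot|\mathcal{F}|$ types and makes every pair of distinct types non-conflicting; both yield the same count and the same conclusion.
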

\begin{proof}
Every vertex $v$ has a \emph{type} $(\psi_v, \mathcal{L}(v))\in [m]\times \mathcal{F}$, which is uniquely determined by its input color $\psi_v$ and list $\mathcal{L}(v)$. Note that adjacent vertices have distinct types, and there are $t= m\cdot |\mathcal{F}|$ (globally known) possible types. Below, we show how to greedily assign each type a color from its list s.t. different types get non-conflicting colors. 
The conflict coloring problem is then solved by running this algorithm locally and consistently by all vertices, where each vertex gets the color assigned to its type.

Let $\{T_i=(m_i,L_i)\}_{i=1}^t$  be a fixed ordering of $[m]\times \mathcal{F}$.
Assign  $T_1$ a color $\phi(T_1)\in L_1$ arbitrarily.  For any $i\ge 1$, given the colors $\phi(T_1),\dots,\phi(T_{i})$ of preceding types, 
assign $T_{i+1}$ a color from $L_{i+1}$ that does not conflict with $\phi(T_1),\dots,\phi(T_{i})$. This can be done since each of the $i$ fixed colors conflicts with at most $d_{\mathcal{R}}(\mathcal{F})$ colors in $L_{i+1}$, i.e., there are at most $i\cdot d_{\mathcal{R}}(\mathcal{F})\leq m\cdot |\mathcal{F}|\cdot d_{\mathcal{R}}(\mathcal{F})$  colors  that $T_{i+1}$ cannot take, and this is less than the size of $L_{i+1}$, as assumed. 
\end{proof}

\subsection{Basic Problems: $P_0$ and $P_1$}
\label{sec:notition}
 Let $\mathcal{C}$ be a fixed and globally known color space (which may depend on the graph $G$). An \emph{$i$-list} is a subset $L\subseteq \Pow^{(i)}(\mathcal{C})$; e.g., the initial color list  $L_v\subseteq \mathcal{C}$ of a vertex $v$ is a $0$-list. 
  Below, we introduce  two problems. Problem $P_0$ is the standard list coloring problem, which we would like to solve via \Cref{l:zeroTypes}. However, the Lemma may not apply, if the lists $L_v$ are not large enough.
 We then introduce problem $P_1$, with parameters $0<\tz\le \kz$, which is a \emph{low intersecting sublist selection} problem. On the one hand, $P_1$ can be reformulated as a conflict coloring problem with larger lists and color space (hence could be solvable via \Cref{l:zeroTypes}), and on the other hand, a solution to $P_1$ can be used to solve $P_0$.
The input of a node $v$ in both problems contains its list $L_v$.
 Formally, we have, for parameters $\tz$ and $\kz$, 
\begin{itemize}
\item \textbf{\boldmath $P_0$ (list coloring):} Node $v$ has to output a color $c(v)\in L_v$ such that adjacent nodes' colors do not conflict, i.e., they are not equal. 
\item \textbf{\boldmath $P_1$ (low intersecting sublists):} Node $v$ has to output a $0$-list  $C_v\subseteq L_v$  such that $|C_v|=\kz$ and adjacent nodes' $0$-lists do not $\tz$-conflict.
 
Two $0$-lists $C,C'\subseteq \mathcal{C}$  do \emph{$\tz$-conflict} if $|C\cap C'|\geq \tz$.
\end{itemize}
Note that problems  $P_0$ and $P_1$ are not conflict coloring problems in the formal sense defined above (e.g., we do not define a list family $\mathcal{F}$ or a list assignment $\mathcal{L}:V\rightarrow \mathcal{F}$).  The aim with such definitions is to have a higher level and more intuitive (but still formal) problem statement. 
As the name suggests, in $P_1$ each node needs to compute a subset of its list such that the outputs form a low intersecting set family. 
$P_1$ can be reduced to a formal conflict coloring problem $\PP_1$ whose solution immediately solves the $P_1$ instance (see Thm.~\ref{thm:linialReproven}).

\subsection{Warmup: Linial's Color Reduction (without Lists)}
\label{sec:Linial} 
As a demonstration, we use the introduced framework to re-prove Linial's color reduction theorem~\cite[Thm. 4.1]{linial92} (which was extended to directed graphs in~\cite{BE10sublog}).
An \emph{$r$-cover-free family} of size $k$ over a set $U$ is a collection of $k$ subsets $C_1,\ldots, C_k\subseteq U$ such that no set $C_i$ is a subset of the union of $r$ others. The obtained algorithm is essentially a \emph{greedy construction} (via Lemma~\ref{l:zeroTypes}) of an $r$-cover-free family  (with appropriate parameters) whose existence was proved via the probabilistic method in~\cite{linial92}.  This greedy construction was first obtained in~\cite{sos} but, to our knowledge, remained  unnoticed in the distributed computing community. 

While our aim is to make the proof below reusable for the later sections (hence the general statement in list coloring terms), we note that similar ideas can be used to obtain a less technical proof of Linial's color reduction (see \Cref{app:Linial}).
\begin{theorem}[\cite{linial92,BE10sublog}]
\label{thm:linialReproven}
Let the graph $G$ be $m$-colored and oriented, with max outdegree $\beta$. All nodes have an identical color list $L_v=\mathcal{C}$  from a color space $\mathcal{C}$. 
Further, $\mathcal{C}$, $m$, and $\beta$ are globally known. If 
$
|L_v|= l_0\ge 2e\cdot  \beta^2\cdot \lceil\log m\rceil
$
holds then in $1$ round in \LOCAL, each node can output a color from its list s.t. adjacent nodes output distinct colors. 
\end{theorem}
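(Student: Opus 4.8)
The plan is to run the conflict-coloring machinery of \Cref{sec:zeroRounds} with a single intermediate problem ($t=1$): recast the equal-list instance of $P_1$ from \Cref{sec:notition} as a global conflict coloring instance $\PP_1$, solve $\PP_1$ with zero communication via \Cref{l:zeroTypes}, and then spend the one allotted round turning its solution into a proper coloring from $\mathcal{C}$. Concretely, I would pick parameters $\tz=\lceil\log m\rceil$ and $\kz=\beta\tz$, and set $\PP_1=(\mathcal{C}_1,{\mathcal{R}}_1,\mathcal{F}_1,\mathcal{L}_1)$ with color space $\mathcal{C}_1=\binom{\mathcal{C}}{\kz}$, conflict relation $\{C,C'\}\in{\mathcal{R}}_1$ iff $|C\cap C'|\ge\tz$, and — this is exactly where equality of the input lists is used — the same list $\mathcal{L}_1(v)=\binom{\mathcal{C}}{\kz}$ for every node, so $\mathcal{F}_1=\{\binom{\mathcal{C}}{\kz}\}$ has $|\mathcal{F}_1|=1$. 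Since $\mathcal{C},m,\beta$ are globally known, so are $\mathcal{C}_1,{\mathcal{R}}_1,\mathcal{F}_1$, and a solution of $\PP_1$ is precisely a solution of the $P_1$ instance: each node gets a $\kz$-subset of its list and adjacent subsets meet in fewer than $\tz$ points.

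Next I would bound the conflict degree $d_{{\mathcal{R}}_1}(\mathcal{F}_1)$, i.e.\ the largest number of $\kz$-subsets of $\mathcal{C}$ that meet a fixed $\kz$-subset in at least $\tz$ points. Dividing by $\binom{l_0}{\kz}$ turns this ratio into the upper-tail probability $\Pr[H\ge\tz]$ of a hypergeometric variable $H$ with mean $\mu=\kz^2/l_0$, which by the standard Chernoff/Hoeffding estimate is at most $(e\mu/\tz)^{\tz}$; with $\kz=\beta\tz$ and the hypothesis $l_0\ge 2e\beta^2\lceil\log m\rceil$ we get $e\mu/\tz=e\beta^2\tz/l_0\le 1/2$, so $d_{{\mathcal{R}}_1}(\mathcal{F}_1)<\binom{l_0}{\kz}\cdot 2^{-\tz}\le\binom{l_0}{\kz}/m$ (the strict inequality coming from the slack in $\binom{\kz}{\tz}<(e\kz/\tz)^{\tz}$). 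Hence $l=|\mathcal{L}_1(v)|=\binom{l_0}{\kz}>m\cdot|\mathcal{F}_1|\cdot d_{{\mathcal{R}}_1}(\mathcal{F}_1)$, and \Cref{l:zeroTypes} applies: with no communication it produces a color for each of the $m$ types of $\PP_1$, i.e.\ a family $\{C_i\}_{i\in[m]}$ with $C_i\in\binom{\mathcal{C}}{\kz}$ and $|C_i\cap C_j|<\tz$ whenever $i\ne j$. Every node computes this family locally (everything is globally known) and sets $C_v:=C_{\psi_v}$, where $\psi_v\in[m]$ is its input color.

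Now I would use the single communication round: each node $v$ sends $\psi_v$ to its neighbors and thereby learns $C_u=C_{\psi_u}$ for each of its at most $\beta$ out-neighbors $u$ (write $N^+(v)$ for this set). Since $\psi_v\ne\psi_u$ for adjacent $u,v$, the solution of $\PP_1$ gives $|C_v\cap C_u|<\tz$, so $\bigl|C_v\cap\bigcup_{u\in N^+(v)}C_u\bigr|\le\beta(\tz-1)<\beta\tz=\kz=|C_v|$; hence $v$ can output a color $c(v)\in C_v\setminus\bigcup_{u\in N^+(v)}C_u$, which lies in $C_v\subseteq\mathcal{C}=L_v$ as required. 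For correctness, on any edge oriented $u\to v$ we have $c(u)\notin C_v$ (because $v\in N^+(u)$) while $c(v)\in C_v$, so $c(u)\ne c(v)$; thus adjacent nodes output distinct colors. The family $\{C_i\}$ is exactly a $\beta$-cover-free family over $\mathcal{C}$, and this last step is precisely Linial's original argument, now obtained constructively through \Cref{l:zeroTypes} rather than the probabilistic method.

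The only genuinely technical point is the middle step: making the conflict-degree bound (the hypergeometric tail estimate) sharp enough, and then choosing $\kz,\tz$ so that the cover-free inequality $\beta(\tz-1)<\kz$ and the zero-round inequality $\binom{l_0}{\kz}>m\,d_{{\mathcal{R}}_1}(\mathcal{F}_1)$ hold simultaneously under the budget $l_0=\Theta(\beta^2\log m)$ (this is what forces the constant $2e$ and the $\lceil\log m\rceil$ rounding). Everything else — modeling $P_1$ as $\PP_1$, invoking \Cref{l:zeroTypes}, and the final ``pick an element not covered by the out-neighbors'' step — is routine bookkeeping.
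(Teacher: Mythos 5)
Your proposal is correct and follows essentially the same route as the paper: the same parameters $\tz=\lceil\log m\rceil$, $\kz=\beta\tz$, the same recasting of $P_1$ as the conflict coloring instance $\PP_1$ with singleton family $\mathcal{F}_1$, zero-round solvability via \Cref{l:zeroTypes}, and the same one-round ``pick a color not covered by out-neighbors'' finish. The only difference is cosmetic: you phrase the conflict-degree versus list-size bound as a hypergeometric tail estimate $(e\mu/\tz)^{\tz}\le 2^{-\tz}$, which is the same computation the paper carries out explicitly in Claims~\ref{claim:p1conflictDegree} and~\ref{c:lonedone} via the binomial-ratio inequality.
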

\begin{proof}
 Our goal is to solve $P_0$ with the given lists. To this end, it suffices to solve $P_1$ with parameters $\tau=\lceil \log m\rceil>1$ and $k=\beta\cdot \tau$, without communication. Indeed, having selected the sublists $(C_v)_{v\in V}$, we need only one round to solve $P_0$: Node $v$ learns sublists of its outneighbors and outputs any color $c(v) \in C_v \setminus \cup_{u\in N_{out}(v)} C_u$.
This can be done since for any outneighbor $u$, $C_v$ and $C_u$ do not $\tz$-conflict ($|C_v\cap C_u|\le \tz-1$) and hence $|C_v\setminus \cup_{u\in N_{out}(v)} C_u| \ge k-(\tz-1)\beta>0$.

\smallskip

We recast the given $P_1$ instance with (identical) input lists $(L_v)_{v\in V}$ as a conflict coloring problem $\PP_1=(\mathcal{C}_1, \mathcal{R}_1, \mathcal{F}_1,\mathcal{L}_1)$ with color space $\mathcal{C}_1=\Pow(\mathcal{C})$ and the $\tz$-conflict relation as $\mathcal{R}_1$. The list of a node $\mathcal{L}_1(v)={L_v\choose k}$ in $\PP_1$ consists of all $k$-sized subsets of its input list $L_v$. 
As each $L_v$ is identical to $\mathcal{C}$, we have that  $\mathcal{L}_1$ maps each node $v$ to the same list ${L_v\choose k}={\mathcal{C}\choose k}$ and the list family  $\mathcal{F}_1=\{{\mathcal{C}\choose k}\}$ consists of that singleton.  A solution to $\PP_1$ immediately solves $P_1$.

\begin{claim}
\label{claim:p1conflictDegree}
 The conflict degree of $\PP_1$ is upper bounded by $d_1 = {\kz \choose \tz}\cdot {{l_0-\tz} \choose {\kz-\tz}}$. 
\end{claim}
\begin{proof} 
Consider two arbitrary lists $L,L'\in \mathcal{F}_1$ and some $0$-list $C\in L$ (that is of size $k$). Each 0-list $C'\in L'$ that $\tz$-conflicts with $C$ can be constructed by first choosing $\tz$ elements of $C$, and then adding $k-\tz$ elements from the rest of $\mathcal{C}$ which is of size $l_0-\tau$. This can be done in at most $d_1$ many ways.
\renewcommand{\qed}{\ensuremath{\hfill\blacksquare}}
\end{proof}
\renewcommand{\qed}{\hfill \ensuremath{\Box}}
\begin{claim}\label{c:lonedone}Let $l_1={l_0\choose \kz}$. For any $k\ge \tz>1$, if $l_0\ge 2e\kz^2/\tz$ then $l_1/d_1> 2^\tau$.  \end{claim}
\begin{proof} We have
\begin{equation}
\frac{l_1}{d_1} = \frac{{l_0\choose \kz}}{{l_0-\tz\choose \kz-\tz}{\kz\choose \tz}}
>\left(\frac{l_0}{k}\right)^{\tau}\cdot \left(\frac{\tau}{ek}\right)^{\tau} =\left(\frac{l_0\tau}{ek^2}\right)^\tau\ge 2^\tau~,\label{eqn:lonedone}
\end{equation}
where in the first inequality, we used the well-known approximation ${\kz\choose \tz}\leq (e\kz/\tz)^\tz$, and 
the following inequality,  applied to ${l_0\choose \kz}/{l_0-\tz\choose \kz-\tz}$:  
for integers  $L>K>x>0$,  
\begin{equation}\label{eqn:A}
\frac{{L\choose K}}{{L-x\choose K-x}} = \frac{L!(K-x)!(L-K)!}{K!(L-K)!(L-x)!}
=\frac{L(L-1)\dots(L-x+1) }{K(K-1)\dots(K-x+1)}>\left(\frac{L}{K}\right)^{x}~, 
\end{equation}
which follows as $(L-i)/(K-i)>L/K$ holds for $0< i\leq x$.  
\renewcommand{\qed}{\ensuremath{\hfill\blacksquare}}
\end{proof}
\renewcommand{\qed}{\hfill \ensuremath{\Box}}
Since $\tau\ge \lceil\log m\rceil$ and $|\mathcal{F}_1|=1$ the last claim implies $|\mathcal{L}_1(v)| =l_1>m d_1\ge m |\mathcal{F}_1| d_{\mathcal{R}_1}(\mathcal{F}_1)$, hence we can solve $\PP_1$ (and thus also $P_1$) without communication, using  \Cref{l:zeroTypes}. 
\end{proof}

What we did above is  \emph{greedily} forming  a \emph{$\Delta$-cover-free family} $C_1,\ldots, C_m\subseteq [O(\Delta^2 \log m)]$ of \emph{size $m$}. 
The same was done in~\cite{linial92}, using the probabilistic method.
Having such a family globally known, every vertex of input color $x$ picks, in 0 rounds, the set $C_x$ as its candidate output colors (neighboring vertices get distinct sets). Then, every vertex of color $x$ learns the sets $C$ of its neighbors, and based on the $\Delta$-cover-free property and the fact that there are at most $\Delta$ neighbors, can select a color $c\in C_x$ that is not a candidate for any neighbor.

\section{Linial for Lists}
\label{sec:versionTwo}
The goal of this section is to prove the following theorem.

\smallskip 

\noindent\textbf{\Cref{thm:2rounds}} (Linial for Lists)\textbf{.}
\textit{
In a directed graph with max. degree $\Delta$, max. outdegree $\beta$, and an input $m$-coloring, list coloring with lists $L_v$ from a color space $\mathcal{C}$ and of size  \[|L_v|\geq l_0=4e\beta^2(4\log \beta+\log\log |\mathcal{C}|+\log\log m+8)\]  can be solved in 2 rounds in  \LOCAL. Each node sends $l_0\cdot \lceil\log|\mathcal{C}|\rceil+\lceil\log m\rceil$ bits in the first round and $\lceil l_0/4e\beta^2\rceil$ bits in the second.
}

\medskip

\textbf{Assumption.} Throughout this section, we assume that  \emph{the list of each node is exactly of size $l_0$}; if a node's list is larger it can select an arbitrary subset of size $l_0$. 

\subsection{The Problem $P_2$ (Low Intersecting Sublist Systems)} 
\label{ssec:sublistsystem}

Recall that when applying our framework to the case where all lists were equal (Thm.~\ref{thm:linialReproven}), 
we essentially constructed a $\Delta$-cover-free family over the color space, and this was sufficient because we only needed a family of size $m$: one set for each possible input pair $(x, L)$ of a color $x$ and list $L$, with  \emph{the same $L$ for all nodes}. 
In order to replicate the construction for list coloring, we would need to construct a cover-free family with a set $C_{x,L}$ for every combination of color $x$ and list $L$, and such that $C_{x,L}\subseteq L$. It is not hard to see that such a family does not exist. 
Instead, we introduce problem $P_2$, whose goal is to assign every input $(x,L)$ a \emph{collection of candidate subsets} $K_{x,L}=\{C_{x,L,1}, C_{x,L,2},\ldots \}$, where each $C_{x,L,i}\subseteq L$. 
Further, we need that for every pair of distinct collections, there are not many pairs of subsets from the two collections that intersect much (in a sense formally defined below). This ensures that having such $K_{x,L}$, the nodes can compute the desired $\Delta$-cover free family with one communication round, and use it to choose a color in another round.

Problem $P_2$ depends on  parameters $0<\tz\le\kz\le l_0$ and $0<\ta\le\ka$, and each node has a list $L_v\subseteq \mathcal{C}$ in its input.  Instead of a color (as in $P_0$) or a sublist (as in $P_1$), each vertex $v$ now needs to output a collection $K_v=\{C_1,C_{2},\ldots\}$ of sublists of $L_v$, each of size $\kz$.
\begin{itemize}
\item[] \textbf{\boldmath$P_2$ (low intersecting sublist systems):} Node $v$ has to output a $1$-list  $K_v\subseteq \Pow(L_v)$  s.t. adjacent nodes' $1$-lists do not $(\ta,\tz)$-conflict and $|K_v|=\ka$ and $|C|=\kz$ for all $C\in K_v$. 

Two $1$-lists $K, K'\subseteq \Pow(\mathcal{C})$ do \emph{$(\ta,\tz)$-conflict} 
if there are two sequences $C_1,\ldots,C_{\ta}\in K$ and $C'_1,\ldots,C'_{\ta}\in K'$, where at least one of the sequences has $\ta$ distinct elements and for every $1\leq i\leq \ta$, $C_i$ and $C_i'$ $\tz$-conflict.
\end{itemize} 
We  prove in \Cref{lem:wayup} that with a suitable choice of the parameters a solution of $P_2(\tz,\kz,\ta,\ka)$ yields a solution of $P_1(\tau,k)$ and $P_0$, where we implicitly impose (and throughout this section assume)  that $P_0, P_1$ and $P_2$ receive the same input lists $(L_v)_{v\in V}$.

\subsection{Algorithm}
\label{ssec:alg}
Under the assumptions of Thm.~\ref{thm:2rounds}, fix the following (globally known) four parameters: 
$\tz=\lceil 8\log\beta + 2\log\log|\mathcal{C}| + 2\log\log m\rceil+14$,  $\ta=2^{\tau-\lceil\log(2e\beta^2)\rceil}$,  $\kz=\beta\cdot \tz$ and $\ka=\beta\cdot\ta$.  
Note that, $\tau', k, k'$ are determined by $\tz$ and $\beta$. \footnote{It may also be helpful to note the similarity between this parameter setting and that in Thm.~\ref{thm:linialReproven}.}  We have the bound $l_0\ge 2e k^2/\tz$ on list size. 

\textbf{The algorithm}  consists of two phases. In the first phase, nodes \textbf{locally} and without any communication compute a solution $(K_v)_{v\in V}$ of $P_2$ consisting of $1$-lists  (see \Cref{l:newzero}).  The second phase has two rounds of communication. In the \textbf{first round}, each node $v$ learns the solution $K_u$ to $P_2$ of each outneighbor $u\in N_{out}(v)$, 
and selects a $0$-list $C_v\in K_v$ that does not conflict with the 0-lists in $K_u$, for $u\in N_{out}(v)$, and thus is a solution to $P_1$ (\Cref{lem:wayup}). In the \textbf{second round}, node $v$ learns the lists $C_u$ of outneighbors, and selects a color $c(v)\in C_v$ that does not appear in $C_u$, for $u\in N_{out}(v)$ (\Cref{lem:wayup}).  This solves $P_0$.

\begin{lemma}[$P_2\rightarrow P_1\rightarrow P_0$] \label{lem:wayup}
Given a solution $(K_v)_{v\in V}$ of $P_2$ (a solution $(C_v)_{v\in V}$ of $P_1$), a solution of $P_1$ (of $P_0$, resp.) can be computed in one round. 
\end{lemma}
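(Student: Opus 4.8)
\textbf{Proof plan for \Cref{lem:wayup}.}

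The plan is to verify the two implications separately, each via a one-round protocol in which a node learns the relevant outputs of its outneighbors and then makes a local selection that is guaranteed to succeed by a counting argument. The $P_1 \to P_0$ direction is exactly the argument already used inside the proof of \Cref{thm:linialReproven}: given the sublists $(C_v)$, node $v$ learns $C_u$ for each $u \in N_{out}(v)$ and outputs any color $c(v) \in C_v \setminus \bigcup_{u \in N_{out}(v)} C_u$. Since adjacent $0$-lists do not $\tz$-conflict, each $C_u$ removes at most $\tz - 1$ colors from $C_v$, so at most $(\tz-1)\beta$ colors are forbidden, which is strictly less than $|C_v| = \kz = \beta\tz$; hence a valid color exists. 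One should also check that the resulting coloring is proper: for an edge $\{u,v\}$ oriented (say) $u \to v$, $v$ explicitly avoided all colors in $C_u$, and $c(u) \in C_u$, so $c(u) \ne c(v)$.

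For the $P_2 \to P_1$ direction, the protocol is: node $v$ learns $K_u$ for each $u \in N_{out}(v)$, and selects some $C_v \in K_v$ that does not $\tz$-conflict with any $C_u \in K_u$ for any outneighbor $u$. The output $C_v$ then satisfies $|C_v| = \kz$ and, for every edge, the two selected $0$-lists do not $\tz$-conflict, which is precisely a solution to $P_1$. The crux is that such a choice of $C_v$ exists, and this is where the definition of $(\ta,\tz)$-conflict does the work: because $K_v$ and $K_u$ do not $(\ta,\tz)$-conflict, there do \emph{not} exist $\ta$ distinct elements of $K_v$ each $\tz$-conflicting with a corresponding element of $K_u$ (with the distinctness allowed on either side). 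Interpreting this, for a fixed outneighbor $u$, the number of $C \in K_v$ that $\tz$-conflict with at least one member of $K_u$ is at most $\ta - 1$ — otherwise we could extract $\ta$ distinct such $C$'s together with witnessing partners in $K_u$, contradicting non-$(\ta,\tz)$-conflict. Summing over the at most $\beta$ outneighbors, at most $(\ta-1)\beta$ elements of $K_v$ are ``bad'', and since $|K_v| = \ka = \beta\ta > (\ta-1)\beta$, a good $C_v$ remains.

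The main obstacle is the careful unpacking of the asymmetric distinctness clause in the definition of $(\ta,\tz)$-conflict: one must argue that a failure of the selection (too many bad $C$'s in $K_v$ for some single outneighbor) can be massaged into two sequences of length $\ta$ — the $K_v$-side sequence consisting of $\ta$ distinct bad lists, and the $K_u$-side sequence of matching conflicting partners, not necessarily distinct — which is exactly the configuration forbidden by non-$(\ta,\tz)$-conflict. Once this translation is in hand, the rest is the same ``more resources than constraints'' counting used throughout the paper. I would also remark that all communication here is genuinely one round per implication: the only data transmitted are the outneighbors' $1$-lists (resp.\ $0$-lists), matching the message-size bookkeeping claimed in \Cref{thm:2rounds}.
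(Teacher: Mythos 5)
Your proof is correct and follows the same route as the paper's: for each implication, a one-round exchange with outneighbors followed by a counting argument showing that at most $\beta(\ta-1)$ (resp.\ $\beta(\tz-1)$) elements of $K_v$ (resp.\ $C_v$) are excluded, which is fewer than $\ka=\beta\ta$ (resp.\ $\kz=\beta\tz$). Your explicit unpacking of the asymmetric distinctness clause in the $(\ta,\tz)$-conflict definition and the remark on properness over directed edges are good hygiene but match the paper's (more terse) reasoning, including its appeal to the symmetry of the conflict relation.
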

\begin{proof}
\textbf{\boldmath $P_2\rightarrow P_1$:} As $K_v$ and $K_{u}$ do not $(\ta,\tz)$-conflict for any $u\in N_{out}(v)$, there are at most $\ta-1$ $0$-lists $C\in K_v$ that  $\tz$-conflict with a $0$-list in $K_u$. By removing all $C$ from $K_v$ that $\tz$-conflict with any $C'\in K_u$ for any outneighbor $u\in N_{out}(v)$ at least $|K_v|-\beta\cdot (\ta-1)=\ka-\beta\cdot (\ta-1)\geq 1$ outputs remain; let $C_v$ be any such $0$-list. 
As the conflict relation is symmetric, $P_1$ is solved.

\textbf{\boldmath $P_1\rightarrow P_0$:}
 Since $C_v,C_u$ do not $\tz$-conflict, removing from $C_v$ all the colors from the $0$-lists of the outneighbors leaves at least  $\kz-\beta\cdot (\tz-1)\geq 1$ colors  that $v$ can select as $c(v)$. 
\end{proof}

\subsection{Zero Round Solution to $P_2$}
\label{sec:zeroRound}
 The results in this section hold for parameters $\tz, \ta, \ka$ fixed as in \Cref{ssec:alg}, and for any $\tz\le \kz\le \beta\tz$. While we set $\kz=\beta\tz$ for solving $P_0$, we will use another value of $\kz$ for our defective coloring result (see \Cref{sec:listdefective}). 
Note that we still have the bound $l_0\ge 2e k^2/\tau$ on list size, for any such $\kz$. 
The goal of this section is to prove the following lemma.
\begin{lemma}[$P_2$ in zero rounds]\label{l:newzero}
Under the assumptions of Thm.~\ref{thm:2rounds}, 
the problem $P_2(\tz, \kz, \ta, \ka)$  can be solved in zero rounds. 
\end{lemma}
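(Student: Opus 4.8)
I want to apply the Zero Round Solution lemma (Lemma~\ref{l:zeroTypes}) to an appropriately defined formal global conflict coloring instance $\PP_2$ that captures $P_2$, exactly mirroring the structure of the proof of Thm.~\ref{thm:linialReproven} but ``one level up.'' So first I would recast $P_2(\tz,\kz,\ta,\ka)$ as a formal instance $\PP_2 = (\mathcal{C}_2, \mathcal{R}_2, \mathcal{F}_2, \mathcal{L}_2)$: set $\mathcal{C}_2 = \Pow^{(2)}(\mathcal{C})$, let $\mathcal{R}_2$ be the $(\ta,\tz)$-conflict relation on $1$-lists, and for a node $v$ with input list $L_v$ of size $l_0$ let $\mathcal{L}_2(v)$ be the family of all $1$-lists $K\subseteq \Pow(L_v)$ with $|K|=\ka$ and $|C|=\kz$ for all $C\in K$ — i.e.\ $\mathcal{L}_2(v) = \binom{\binom{L_v}{\kz}}{\ka}$. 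Then $\mathcal{F}_2 = \{\,\mathcal{L}_2(v) : v\in V\,\}$ is a family indexed by the distinct $l_0$-element subsets of $\mathcal{C}$, so $|\mathcal{F}_2| \le \binom{|\mathcal{C}|}{l_0} \le |\mathcal{C}|^{l_0}$. A solution to $\PP_2$ is immediately a solution to $P_2$.

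**Key steps.** The argument then has three quantitative pieces. (i) \emph{Bound the conflict degree $d_2 = d_{\mathcal{R}_2}(\mathcal{F}_2)$.} Fix $K,K'\in\mathcal{F}_2$ and $K\in\mathcal{L}_2$; I need to count the $K'\in\mathcal{L}_2'$ that $(\ta,\tz)$-conflict with $K$. A $(\ta,\tz)$-conflict is witnessed by $\ta$ pairs $(C_i, C_i')$ with $C_i\in K$, $C_i'\in K'$, each pair $\tz$-conflicting, and one side having $\ta$ distinct sets. Crudely: choose the $\ta$ witnessing sets $C_i$ from $K$ (at most $\ka^{\ta}$ ways, or $\binom{\ka}{\ta}$ if I want to be careful about distinctness on that side), then for each the conflicting partner $C_i'$ is a $\kz$-set $\tz$-intersecting $C_i$, which by the Claim~\ref{claim:p1conflictDegree} computation is at most $d_1 = \binom{\kz}{\tz}\binom{l_0-\tz}{\kz-\tz}$ choices; the remaining $\ka - \ta$ sets of $K'$ are arbitrary $\kz$-subsets of $L_v'$, at most $\binom{l_0}{\kz}^{\ka-\ta}$ ways. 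So $d_2 \le \ka^{\ta}\cdot d_1^{\ta}\cdot \binom{l_0}{\kz}^{\ka-\ta}$ — a clean ``$\ta$-th power of the $P_1$ picture, padded with free coordinates.'' (ii) \emph{Lower bound $l_2/d_2$.} With $l_2 = |\mathcal{L}_2| = \binom{\binom{l_0}{\kz}}{\ka}$, the $\binom{l_0}{\kz}^{\ka-\ta}$ padding factor cancels against most of $l_2$ using inequality~\eqref{eqn:A}, leaving essentially $\bigl(\binom{l_0}{\kz}/(\ka\, d_1)\bigr)^{\ta} = \bigl((l_1/d_1)\cdot \tfrac{1}{\ka}\bigr)^{\ta}$ up to constants; by Claim~\ref{c:lonedone} we have $l_1/d_1 > 2^{\tz}$, and $\ka = \beta\ta \le 2^{\tz}$ roughly (indeed $\ta = 2^{\tz - \lceil\log(2e\beta^2)\rceil}$ is chosen precisely so that $\beta\cdot d_1$-type losses are absorbed and $l_1/(d_1\ka) \ge 2^{\lceil\log(2e\beta^2)\rceil}/(2e\beta)\ge \beta$ or similar). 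So $l_2/d_2 \gtrsim \beta^{\ta}$, or more usefully $\ge (2^{c})^{\ta}$ for a constant $c\ge 1$. (iii) \emph{Check the Lemma~\ref{l:zeroTypes} hypothesis $l_2 > m\cdot|\mathcal{F}_2|\cdot d_2$}, i.e.\ $l_2/d_2 > m\,|\mathcal{C}|^{l_0}$. Taking logs, I need $\ta \cdot \Omega(1) \gtrsim \log m + l_0\log|\mathcal{C}|$. Since $l_0 = \Theta(\beta^2\tz)$ and $\ta = 2^{\tz}/\Theta(\beta^2)$, this becomes $2^{\tz}/\Theta(\beta^2) \gtrsim \log m + \Theta(\beta^2\tz\log|\mathcal{C}|)$, i.e.\ $2^{\tz} \gtrsim \beta^4\tz\log|\mathcal{C}| + \beta^2\log m$ — and this is exactly what the choice $\tz = \lceil 8\log\beta + 2\log\log|\mathcal{C}| + 2\log\log m\rceil + 14$ is engineered to guarantee (the $8\log\beta$ beats $\beta^4$ and the extra $\log\tz$, the $2\log\log|\mathcal{C}|$ beats $\log|\mathcal{C}|$, etc.). Apply Lemma~\ref{l:zeroTypes}: since $m,\mathcal{R}_2,\mathcal{F}_2$ are globally known (they depend only on $\mathcal{C}, m, \beta$, which are) and $G$ is $m$-colored, every node locally runs the greedy type-coloring and outputs the color assigned to its type $(\psi_v, \mathcal{L}_2(v))$, solving $P_2$ in zero rounds.

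**Main obstacle.** The conceptual steps are routine given the $P_1$ warm-up; the real work — and the place I expect to sink the most effort — is the bookkeeping in (ii)--(iii): getting the conflict-degree bound $d_2$ tight enough (in particular not being wasteful about the $\ta$ distinct-vs-not-distinct subtlety and the $\binom{\ka}{\ta}$ vs $\ka^{\ta}$ choices) so that the $l_2/d_2 > m|\mathcal{C}|^{l_0}$ inequality actually closes with the stated constants. A secondary subtlety is making sure the padding factor $\binom{l_0}{\kz}^{\ka-\ta}$ cancels cleanly via~\eqref{eqn:A} rather than just crudely — I would isolate that as a small standalone claim of the form ``$\binom{\binom{l_0}{\kz}}{\ka} \big/ \binom{l_0}{\kz}^{\ka-\ta} \ge \bigl(\binom{l_0}{\kz}/\ka\bigr)^{\ta}$'' and then combine it with Claims~\ref{claim:p1conflictDegree} and~\ref{c:lonedone}. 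I would also double-check that the freedom $\tz \le \kz \le \beta\tz$ advertised at the start of the section is preserved throughout, since the defective-coloring application in Sec.~\ref{sec:listdefective} reuses this lemma with a different $\kz$, and the inequality $l_0\ge 2e\kz^2/\tz$ must remain the only place $\kz$ enters the bounds.
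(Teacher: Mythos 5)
There is a genuine gap, and it is exactly at the point you dismiss as bookkeeping: your conflict-degree bound for the \emph{unpruned} family $\mathcal{L}_2(v)=\binom{\binom{L_v}{\kz}}{\ka}$ is invalid, and no sharper count can rescue it. Your enumeration (``choose $\ta$ witnesses $C_i\in K$, choose a $\tz$-conflicting partner $C_i'$ for each, pad with $\ka-\ta$ arbitrary $\kz$-sets'') only generates those conflicting $K'$ in which the $\ta$ \emph{distinct} sets sit on the $K'$ side. But the $(\ta,\tz)$-conflict definition also allows the distinct sets to sit on the $K$ side with the partners $C_i'\in K'$ repeating: such a $K'$ may contain a \emph{single} set $C'$ that $\tz$-conflicts with $\ta$ distinct members of $K$, together with $\ka-1$ completely arbitrary sets, and these $K'$ are not covered by padding with only $\ka-\ta$ free sets. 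This case really blows up: take $K$ whose $\ka$ members all contain a common $\tz$-subset $T\subseteq L_v$ (possible for $\beta\ge 2$ since $\binom{l_0-\tz}{\kz-\tz}\ge\ka$); then every $K'$ containing at least one $\kz$-set $\supseteq T$ conflicts with $K$, so the conflict degree of the full family is at least $\binom{l_1-1}{\ka-1}$, whence $l/d\le l_1/\ka\le 2^{l_0}$. That falls hopelessly short of the threshold $m\cdot|\mathcal{F}_2|$ you must beat in Lemma~\ref{l:zeroTypes} (already $|\mathcal{F}_2|=\binom{|\mathcal{C}|}{l_0}$ typically exceeds $2^{l_0}$, and the remaining corner cases fail by a direct check of the parameters). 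So with your choice of $\mathcal{L}_2(v)$, the hypothesis of Lemma~\ref{l:zeroTypes} simply does not hold, and steps (ii)--(iii) cannot close.

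The missing idea — the heart of the paper's proof — is to \emph{prune}: node $v$'s $\PP_2$-list is $L_2(L_v)$, the set of ``good'' $K$, i.e.\ those that $(\ta,\tz)$-conflict with fewer than $d_2$ members of $\bar L_2(L_v)=\binom{\binom{L_v}{\kz}}{\ka}$. Two further arguments are then required. First (Lemma~\ref{lem:l2Size}), an orientation/Markov argument: orienting each conflict toward the side carrying the $\ta$ distinct sets, the \emph{outdegree} is bounded by your Case-A-style count $\binom{\ka d_1}{\ta}\binom{l_1-\ta}{\ka-\ta}=d_2/4$, so the average undirected degree is at most $d_2/2$ and at least half of all $K$ are good — lists still have size $l_2=\binom{l_1}{\ka}/2$. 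Second (Lemma~\ref{lem:conflictDegree}), a bijection between $\binom{L_v}{\kz}$ and $\binom{L_u}{\kz}$ fixing the common ground set (this is where the normalization $|L_v|=l_0$ exactly is used) shows that goodness with respect to $v$'s \emph{own} ground set already bounds by $d_2$ the number of conflicting $1$-lists in \emph{any other} node's pruned list, so $d_{\mathcal{R}_2}(\mathcal{F}_2)\le d_2$. Since goodness depends only on $L_v$, the pruning costs no communication, and only then does the computation you sketch (with the paper's $d_2=4\binom{\ka d_1}{\ta}\binom{l_1-\ta}{\ka-\ta}$ rather than your padding factor $\binom{l_0}{\kz}^{\ka-\ta}$) verify $l_2/d_2>m|\mathcal{F}_2|$ as in Lemma~\ref{lem:calc} and Appendix~\ref{app:calculationZero}.
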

To prove \Cref{l:newzero}, we reduce (without communication) an instance of $P_2$ to a conflict coloring instance $\PP_2$ that can be solved in zero rounds with \Cref{l:zeroTypes}.

  \smallskip
\textbf{Reducing $P_2$ to a conflict coloring instance $\PP_2$ (without communication):}\\ 
Given input lists $(L_v)_{v\in V}$ and parameters $0< \tz\le \kz\le l_0$ and $0<\ta\le \kz$, the 
conflict coloring instance $\PP_2$ is given by  the colorspace  $\Pow^{(2)}(\mathcal{C})$, the $(\ta,\tz)$-conflict relation $\mathcal{R}_2$ on 1-lists, the list family $\mathcal{F}_2=\text{Im}(L_2)=\{L_2(S)\mid S\in {\mathcal{C} \choose l_0}\}$   and list $\mathcal{L}(v)=L_2(L_v)$ for node $v$, where $L_2: {\mathcal{C}\choose l_0}\rightarrow \Pow^{(3)}(\mathcal{C})$ maps $l_0$-sized subsets of $\mathcal{C}$ to $2$-lists and is defined below.   The map $L_2$, the colorspace, the conflict relation and the set family $\mathcal{F}_2$ are global knowledge and no communication is needed to compute the list $\mathcal{L}(v)$ of a node in $\PP_2$. 

To define the map $L_2$ we need another definition. 
For an integer $t\geq 0$ and a $2$-list $T$, a $1$-list $K\in T$ is \emph{$(T,t,\ta,\tz)$-good}
if there are less than $t$ $1$-lists $K'\in T$ such that $K$ and $K'$ do $(\ta,\tz)$-conflict.
 We define maps $L_1$, $\bar{L}_2$ and $L_2$, as follows. For  $S\in {\mathcal{C}\choose l_0}$,

\begin{align*}
L_1(S)&={S\choose \kz} && \text{(elements $C$ are $0$-lists)}\\
\bar{L}_2(S)& ={L_1(S)\choose \ka} && \text{(elements $K$ are $1$-lists)}\\
L_2(S)&=\{K\in \bar{L}_2(S) | \text{ $K$ is $(\bar{L}_2(S), d_2, \ta, \tz)$-good}\}  && \text{(elements $K$ are $1$-lists)}, 
\end{align*}
where $d_2$ is chosen as in \Cref{lem:l2Size}.\footnote{The precise value is not important to understand how $L_2$ is formed.} 
Due to the definition of the $(\ta,\tz)$-conflict relation and the map $L_2$, solving $\PP_2$ immediately solves $P_2$.

The sizes of $L_1(S), \bar L_2(S)$ and $L_2(S)$ do not depend on $S$. Let  $l_1=|L_1(S)|={l_0 \choose \kz}$, and $l_2=|\bar L_2(S)|/2={l_1 \choose k'}/2$.  We will later show that $|L_2(S)|\ge l_2$. Let $\mathcal{F}_1=\{L_1(S)\mid S\in {\mathcal{C} \choose l_0}\}$.

\textbf{Some intuition:} In the conflict coloring instance $\PP_2$, every node $v$ has a list $\{K_1, K_2, \ldots\}$ of $1$-lists, each a collection of subsets of its input list $L_v$. To ensure small conflict degree, but still large list size, it is enough that $v$ only takes $K$s that are ``good", as defined above. Since being ``good'' only depends on $L_v$, node $v$ can also compute its $\PP_2$-list locally.

In \Cref{lem:l2Size,lem:conflictDegree,lem:calc}, we show  that  lists $L_2(L_v)$ are large  and that $\PP_2$ has small conflict degree. Before that, let us see how these lemmas imply 0-round solvability of $P_2$ (\Cref{l:newzero}). 
\begin{proof}[Proof of \Cref{l:newzero}] To solve an instance of $P_2$ on input lists $(L_v)_{v\in V}$, nodes locally set up the conflict coloring instance $\PP_2$.  \Cref{lem:conflictDegree,lem:l2Size} show that the conflict degree  of $\PP_2$ is bounded by $d_{\mathcal{R}_2}(\mathcal{F}_2)\le d_2$, and that every list in $\mathcal{F}_2$ has size at least $l_2$. Note that  $\mathcal{F}_2$ is globally known and $|\mathcal{F}_2| = {|\mathcal{C}|\choose l_0}<|\mathcal{C}|^{l_0}$, since each element in $\mathcal{F}$ can be written as $L_2(S)$ for some $S\in {\mathcal{C}\choose l_0}$. Using \Cref{lem:calc} we obtain $l_2/d_2\geq \frac{1}{8}2^{2^{\tz-\log(4e\beta^2)}}\ge m\cdot |\mathcal{C}|^{l_0}>m\cdot |\mathcal{F}_2|$, where the second inequality follows by a routine calculation using the definition of $\tau$ and $l_0$ (see \Cref{app:calculationZero}).
Thus, \Cref{l:zeroTypes} holds, and $\PP_2$ and $P_2$ can be solved in zero rounds.
\end{proof}

We continue with proving  \Cref{lem:l2Size,lem:conflictDegree,lem:calc}. 
First, we bound the conflict degree of $\PP_2$.  Recall that it is a property of the list family $\mathcal{F}_2$ and the conflict relation $\mathcal{R}_2$, and is independent of the graph and list assignment. 
The proof involves establishing an isomorphism between $L_2(S)$ and $L_2(S')$, for any $S,S'$, which preserves their common elements. For this, it is crucial to have $|S|=|S'|$. This is why we need all input lists to have same size $|L_v|=l_0$.

\begin{lemma}[Conflict Degrees]
\label{lem:conflictDegree}
Let $X,Y\in {\mathcal{C}\choose l_0}$ be 0-lists. Let $d_1= {\kz \choose \tz}\cdot {{l_0-\tz} \choose {\kz-\tz}}$.
\begin{enumerate}
\item  For any $0$-list $C\in L_1(X)$, there are at most $d_1$ $0$-lists in $L_1(Y)$ that  $\tz$-conflict with $C$. 
\item  For any $1$-list $K\in L_2(X)$, there are at most $d_2$ $1$-lists in $L_2(Y)$ that  $(\ta,\tz)$-conflict with~$K$. In particular, $d_{\mathcal{R}_2}(\mathcal{F}_2)\le d_2$, and this holds irrespective of the value of $d_2$. 
\end{enumerate}
\end{lemma}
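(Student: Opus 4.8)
The plan is to prove both parts of \Cref{lem:conflictDegree} by a counting argument, mirroring the proof of \Cref{claim:p1conflictDegree}, but with the crucial extra ingredient of a size-preserving isomorphism between the relevant lists so that the bound is uniform over all pairs $X,Y\in{\mathcal{C}\choose l_0}$. For part (1), I would fix a $0$-list $C\in L_1(X)={X\choose\kz}$ and bound the number of $C'\in L_1(Y)={Y\choose\kz}$ with $|C\cap C'|\ge\tz$. Any such $C'$ is obtained by choosing $\tz$ elements of $C$ (to be placed in the intersection) and then $\kz-\tz$ further elements of $Y$; since $|Y|=l_0$, this gives at most ${\kz\choose\tz}{l_0-\tz\choose\kz-\tz}=d_1$ possibilities (an overcount, since the same $C'$ may arise several ways, but that only helps). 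This is literally the computation of \Cref{claim:p1conflictDegree} with $Y$ in place of $\mathcal{C}$, and the point $|Y|=l_0$ is what keeps it independent of $Y$.

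For part (2), I would first reduce to the case $X=Y$ by exhibiting a bijection. Fix any bijection $\sigma\colon X\to Y$ that is the identity on $X\cap Y$ (possible precisely because $|X|=|Y|=l_0$, so $|X\setminus Y|=|Y\setminus X|$). Lift $\sigma$ coordinatewise to bijections $X\choose\kz\to Y\choose\kz$ and then $\bar L_2(X)\to\bar L_2(Y)$; call the latter $\hat\sigma$. The key observations are: (i) $\hat\sigma$ restricts to a bijection $L_2(X)\to L_2(Y)$, because $\sigma$ maps $\tz$-conflicting $0$-lists to $\tz$-conflicting $0$-lists and hence $(\ta,\tz)$-conflicting $1$-lists to $(\ta,\tz)$-conflicting ones, so ``$(\bar L_2(\cdot),d_2,\ta,\tz)$-good'' is preserved; and (ii) $\hat\sigma$ preserves intersections within $X\cap Y$: if $C\subseteq X\cap Y$ then $\sigma(C)=C$, hence for $1$-lists $K,K'$ built from subsets of $X\cap Y$, the pattern of $\tz$-conflicts between $K$ and $\hat\sigma(K')$ is the same as between $K$ and $K'$. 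Therefore the number of $K'\in L_2(Y)$ that $(\ta,\tz)$-conflict with a fixed $K\in L_2(X)$ is at most the number of $\tilde K\in L_2(X)$ that $(\ta,\tz)$-conflict with $K$ — because a $(\ta,\tz)$-conflict is witnessed by $\ta$ pairs $C_i\in K$, $C_i'\in K'$ with $C_i\cap C_i'$ large, and by part (1)-type reasoning the relevant intersections can be taken inside $X\cap Y$ up to the count — and the latter is less than $d_2$ by the very definition of $L_2(X)$ (every $K\in L_2(X)$ is $(\bar L_2(X),d_2,\ta,\tz)$-good, and $L_2(X)\subseteq\bar L_2(X)$). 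Hence $d_{\mathcal{R}_2}(\mathcal{F}_2)=\max_{X,Y,K\in L_2(X)}|\{K'\in L_2(Y):K,K'\ (\ta,\tz)\text{-conflict}\}|<d_2$, regardless of what value $d_2$ actually takes — the bound is ``built in'' by the goodness filter.

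The step I expect to be the main obstacle is making the isomorphism argument in part (2) fully rigorous, specifically matching up the $(\ta,\tz)$-conflict witnesses across $\hat\sigma$. The subtlety is that a $(\ta,\tz)$-conflict between $K$ and $K'\in L_2(Y)$ uses $0$-lists $C_i'\in K'$ that need not be subsets of $X\cap Y$, so I cannot simply say $\sigma$ fixes them; I need to argue that pulling $K'$ back via $\hat\sigma^{-1}$ yields $\tilde K=\hat\sigma^{-1}(K')\in L_2(X)$ that still $(\ta,\tz)$-conflicts with $K$, which requires that $\sigma$ (not just its restriction to $X\cap Y$) carries the conflict structure faithfully — but $\sigma$ is a bijection $X\to Y$, and $\tz$-conflict of two $0$-lists $C\subseteq X$, $C'\subseteq Y$ is about $|C\cap C'|$, which is \emph{not} preserved by an arbitrary $\sigma$. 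The resolution is that we only ever compare a $K\in L_2(X)$ (subsets of $X$) with $K'\in L_2(Y)$ (subsets of $Y$), and a $\tz$-conflict $|C\cap C'|\ge\tz$ forces $|C\cap C'\cap(X\cap Y)|\ge\tz$, i.e. the witnessing intersection already lives in $X\cap Y$ where $\sigma$ is the identity; so replacing $C'$ by $\sigma^{-1}(C')$ only enlarges its intersection with $C$ inside $X\cap Y$ and the conflict persists. Once this localization-to-$X\cap Y$ point is spelled out carefully, the bijection $\hat\sigma$ gives an injection from conflict witnesses for $(K,L_2(Y))$ into conflict witnesses for $(K,L_2(X))$, and the goodness of $K$ in $\bar L_2(X)$ closes the argument; part (2)'s ``in particular'' clause then follows by taking the max over all $X,Y$ and all $K\in L_2(X)\subseteq\mathcal{F}_2$.
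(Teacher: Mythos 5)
Your proposal is correct and follows essentially the same route as the paper: part (1) by the same counting as \Cref{claim:p1conflictDegree} with $Y$ in place of $\mathcal{C}$, and part (2) by lifting a bijection $X\to Y$ that fixes $X\cap Y$ to the set systems, observing that any $\tz$-conflict between $C\subseteq X$ and $C'\subseteq Y$ is witnessed inside $X\cap Y$ so conflicts pull back, and then invoking the $(\bar L_2(X),d_2,\ta,\tz)$-goodness of $K$ (the paper phrases this as a contradiction against goodness in $\bar L_2(X)$ rather than via your injection into $L_2(X)$, and it explicitly treats the two cases of which witness sequence is required to be distinct, but these are presentational differences only).
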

\begin{proof} The proof of the first claim is along the same lines as the proof of \Cref{claim:p1conflictDegree}, so
we only prove the second claim here.  Let $X_1=L_1(X),X_2=L_2(X)$ and $\bar X_2=\bar L_2(X)$, and define $Y_1,Y_2,\bar Y_2$ similarly. As $|X|=|Y|$, there is a bijection $\alpha: X\rightarrow Y$ that is the identity on $X\cap Y$: if $c\in X\cap Y$ then $\alpha(c)=c$. Further, since $X_1={X \choose k}$ and $Y_1={Y \choose k}$, we have the bijection $\beta:X_1\rightarrow Y_1$ given by $\beta(\{c_1,\dots,c_k\})=\{\alpha(c_1),\dots,\alpha(c_k)\}$, and since $\bar{X}_2={X_1 \choose \ka}$ and $\bar{Y}_2={Y_1 \choose \ka}$, we have the bijection $\gamma:\bar{X}_2\rightarrow \bar{Y}_2$, where $\gamma(\{C_1,\dots,C_{\ka}\})=\{\beta(C_1),\dots,\beta(C_{\ka})\}$.

\smallskip

We show that the claim holds for any $t\geq 0$ and for any $K\in \bar{X}_2$ that is $(\bar{X}_2,t,\ta,\tz)$-good (which demonstrates that the actual value of $d_2$ is irrelevant).  As  $Y_2\subseteq \bar{Y}_2$, it suffices to show that $K$ does  $(\ta,\tz)$-conflict with at most $t$ 1-lists in $\bar{Y}_2$.
Towards a contradiction, let   $K\in \bar{X}_2$ $(\ta,\tz)$-conflict with each of $t$ distinct 1-lists $K'_1,K'_2,\dots,K'_t\in \bar{Y}_2$ and define $K_i=\gamma^{-1}(K'_i)\in \bar{X}_2$.  We show that $K$ also  $(\ta,\tz)$-conflicts  with each of the distinct ($\gamma$ is a bijection) $K_1,\ldots, K_{t}\in \bar{X}_2$, which is a contradiction to $K$ being $(\bar{X}_2,t,\ta,\tz)$-good: 
To ease notation, let us focus on $K$ and $K_1$.
Assume there are $\ta$ distinct (case 2: not necessarily distinct) 0-lists $C'_1,C'_2,\dots,C'_{\ta}$ in $K'_1$, and $\ta$ not necessarily distinct (case 2: distinct) 0-lists $C_1,C_2,\dots,C_{\ta}$ in $K$, such that $C_i$ and $C'_i$ $\tz$-conflict. Then $\beta^{-1}(C'_i)$ and $C_i$ $\tz$-conflict, since $\alpha$ is the identity on $C_i\cap C'_i$, $\beta^{-1}(C'_i)$ are all distinct (since $\beta$ is a bijection) and belong to $K_1$, therefore $K$ and $K_1$ $(\ta,\tz)$-conflict. 
\end{proof}

Next, we show that at most half of the elements $K\in \bar{L}_2$ fail to be good; this lemma crucially depends on the value of $d_2$. Below, we use the conflict degree $d_1$ from Lemma~\ref{lem:conflictDegree}.
\begin{lemma}[$L_2$ is large]
\label{lem:l2Size}
Let $d_2=4{\ka d_1 \choose \ta}\cdot {l_1-\ta \choose \ka-\ta}$. For any $S\in {\mathcal{C}\choose l_0}$, we have $|L_2(S)|\geq l_2$.
\end{lemma}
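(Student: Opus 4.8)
The plan is to bound, from above, the number of elements $K\in\bar L_2(S)$ that fail to be $(\bar L_2(S),d_2,\ta,\tz)$-good, and show this number is at most $|\bar L_2(S)|/2=l_2$, so that at least $l_2$ elements remain good, i.e.\ $|L_2(S)|\ge l_2$. Fix $S$ and abbreviate $L_1=L_1(S)={S\choose k}$ and $\bar L_2=\bar L_2(S)={L_1\choose \ka}$. A $1$-list $K$ is \emph{bad} if it $(\ta,\tz)$-conflicts with at least $d_2$ other $1$-lists $K'\in\bar L_2$. I would count bad $K$ by a double-counting / union-bound argument over the ``witnesses'' of a conflict.

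The key step is to estimate the number of pairs $(K,K')\in\bar L_2\times\bar L_2$ that $(\ta,\tz)$-conflict, or rather to estimate, for a fixed $K$, how many $K'$ can conflict with it, but in a way that \emph{also} lets me count how many $K$ are involved. The cleanest route: count triples consisting of $K$, a conflicting partner $K'$, and a \emph{witnessing configuration} --- i.e., a choice of $\ta$ pairs $(C_i,C_i')$ with $C_i\in K$, $C_i'\in K'$, $C_i$ and $C_i'$ $\tz$-conflicting, one of the two $\ta$-tuples having distinct entries. To build such a witnessed conflict starting from a fixed $K$: choose the $\ta$ (not-necessarily-distinct) sets $C_1,\dots,C_\ta$ from $K$ --- at most $\binom{\ka}{\ta}\le\binom{\ka}{\ta}$ ways if we want them distinct, and in any case at most $\ka^{\ta}$, but since the witness always has one side with $\ta$ \emph{distinct} entries it is cleaner to sum appropriately; then for each $C_i$ choose a $\tz$-conflicting partner $C_i'$ --- by Lemma~\ref{lem:conflictDegree}(1) there are at most $d_1$ choices of $C_i'\in L_1$ per $i$, so at most $(\ka\,d_1)^{\ta}$, better $\binom{\ka d_1}{\ta}$, ways to pick the distinct $C_1',\dots,C_\ta'$; then extend $\{C_1',\dots,C_\ta'\}$ to a full $\ka$-element set $K'\in\bar L_2={L_1\choose \ka}$ by adding $\ka-\ta$ further elements of $L_1$, at most $\binom{l_1-\ta}{\ka-\ta}$ ways. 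Hence the number of $K'$ that $(\ta,\tz)$-conflict with a fixed $K$ is at most $\binom{\ka d_1}{\ta}\binom{l_1-\ta}{\ka-\ta}=d_2/4$.

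Now I turn this into a count of bad $K$. The number of ordered conflicting pairs $(K,K')$ is at most $|\bar L_2|\cdot(d_2/4)$. A bad $K$ contributes at least $d_2$ to this sum (it has $\ge d_2$ conflicting partners), and the relation is symmetric, so if $B$ is the set of bad $K$ then $|B|\cdot d_2\le \sum_{K\in\bar L_2}\#\{K': K,K'\text{ conflict}\}\le |\bar L_2|\cdot(d_2/4)$, giving $|B|\le|\bar L_2|/4<|\bar L_2|/2=l_2$. Therefore $|L_2(S)|=|\bar L_2|-|B|\ge|\bar L_2|-|\bar L_2|/4\ge|\bar L_2|/2=l_2$, which is the claim. (The factor $4$ in the definition $d_2=4\binom{\ka d_1}{\ta}\binom{l_1-\ta}{\ka-\ta}$ is exactly the slack that makes $|B|\le l_2$ go through with room to spare; any constant $>2$ would do, and $4$ keeps later estimates clean.)

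The main obstacle is getting the witness-counting bookkeeping exactly right, specifically handling the asymmetry in the definition of $(\ta,\tz)$-conflict --- only one of the two $\ta$-tuples $C_1,\dots,C_\ta$ and $C_1',\dots,C_\ta'$ is required to have distinct entries. When counting partners $K'$ of a fixed $K$ I should condition on which side is the distinct one. If the $C_i'$ are the distinct ones, the construction above directly gives $\binom{\ka d_1}{\ta}\binom{l_1-\ta}{\ka-\ta}$. If instead the $C_i$ (inside the fixed $K$) are the distinct ones and the $C_i'$ may repeat, then I only need the \emph{set} $\{C_i'\}$, which has at most $\ta$ elements, each a $\tz$-conflict partner of some $C_i$, so it is a subset of a set of size $\le\ka d_1$; choosing it and extending to $K'$ again costs at most $\binom{\ka d_1}{\ta}\binom{l_1-\ta}{\ka-\ta}$. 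Taking the union over the (at most two) cases at worst doubles the bound, which is still $\le d_2/2$, and absorbed by the factor $4$. Everything else is a mechanical union bound, and no further lemma beyond Lemma~\ref{lem:conflictDegree}(1) and the elementary binomial inequalities already used in the proof of Claim~\ref{c:lonedone} is needed.
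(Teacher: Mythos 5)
Your overall architecture (count witnessed conflicts, then average/threshold) is the right one and matches the paper's proof in spirit, but your key counting claim is too strong, and its justification breaks precisely at the asymmetry you flag at the end. The witness count only controls the direction in which the $\ta$ \emph{distinct} $0$-lists lie in the partner $K'$: then each of them $\tz$-conflicts with a member of $K$, hence lies in the set $X$ of $0$-lists of $L_1(S)$ conflicting with some member of $K$, with $|X|\le \ka d_1$ by Lemma~\ref{lem:conflictDegree}(1), and $K'$ is obtained by choosing $\ta$ elements of $X$ plus $\ka-\ta$ further elements, i.e.\ at most $\binom{\ka d_1}{\ta}\binom{l_1-\ta}{\ka-\ta}=d_2/4$ possibilities. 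In the other direction (distinct witnesses inside the fixed $K$, repetitions allowed on the $K'$ side) all you know is that $K'$ meets $X$ in a set of some size $s$ with $1\le s\le\ta$, and your assertion that ``choosing it and extending to $K'$ again costs at most $\binom{\ka d_1}{\ta}\binom{l_1-\ta}{\ka-\ta}$'' is false: the cost is $\binom{\ka d_1}{s}\binom{l_1-s}{\ka-s}$, which for small $s$ is not dominated by the $s=\ta$ term (already $s=1$ gives $\ka d_1\binom{l_1-1}{\ka-1}$, far exceeding $d_2/4$ in the relevant parameter regime). In fact the per-$K$ statement you want is simply not true: if a single $0$-list $C^*\in L_1(S)$ $\tz$-conflicts with $\ta$ distinct members of $K$ (such $K\in\bar{L}_2(S)$ exist, since any $C^*$ has far more than $\ta$ $\tz$-conflicting $\kz$-subsets of $S$), then every one of the $\binom{l_1-1}{\ka-1}\gg d_2$ sets $K'\ni C^*$ $(\ta,\tz)$-conflicts with $K$. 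Such $K$ are exactly the ``bad'' lists that the definition of $L_2$ discards; if your claim held, every $K$ would be good, $L_2(S)=\bar{L}_2(S)$, and the Markov-type step you append would be pointless.

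The fix keeps your skeleton and is what the paper does. For each $K$, bound only the number of one-sided conflicts whose $\ta$ distinct witnesses lie in the \emph{partner}; by your case-(a) count this is at most $d_2/4$. Every conflicting pair is one-sided in at least one of its two orientations, so the number of ordered conflicting pairs is at most $2\cdot|\bar{L}_2(S)|\cdot d_2/4$; a bad $K$ contributes at least $d_2$ ordered pairs, hence at most $|\bar{L}_2(S)|/2$ lists are bad and $|L_2(S)|\ge |\bar{L}_2(S)|/2=l_2$. The paper phrases this via a digraph on $\bar{L}_2(S)$ whose maximum outdegree is at most $d_2/4$, so the average undirected degree is at most $d_2/2$, and Markov gives the claim. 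Note that with the correct accounting the bound is $|B|\le|\bar{L}_2(S)|/2$, not $|\bar{L}_2(S)|/4$, and the factor $4$ in $d_2$ is used in full (one factor $2$ for the two conflict directions, one for the Markov threshold), so your parenthetical that ``any constant $>2$ would do'' does not survive the repair.
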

\begin{proof}
Fix $S\in {\mathcal{C}\choose l_0}$ and consider the digraph $H=(V_H,E_H)$ over the vertex set $V_H=\bar{L}_2(S)$, where $(K,K')\in E_H$  iff $K$ contains at least $\ta$ lists, each in $\tz$-conflict with a list in $K'$ (in particular, for every $K$, $(K,K)\in E_H$). Note  that a 1-list $K$ is $(\bar{L}_2(S),d_2,\ta,\tz)$-good iff its undirected degree in $H$ is at most $d_2$.
\begin{claim*}
The maximum outdegree of a node $K\in V_H$ is at most $d_2/4$.
\end{claim*}
\begin{proof}
Consider a fixed $K\in V_H$. Let $X\subseteq K$ be the set of 0-lists in $L_1(S)$ that $\tz$-conflict with a 0-list in $K$. By \Cref{lem:conflictDegree} part 1, every $C\in K$ $\tz$-conflicts with at most $d_1$ of 0-lists, hence $|X|\le |K|\cdot d_1=\kz d_1$. Every 1-list $K'$, such that there are at least  $\ta$ 0-lists in $K$ that are in $\tz$-conflict with a 0-list in $K'$, can be obtained by first choosing $\ta$ 0-lists from $X$, and adding  an arbitrary subset of $\ka-\ta$ other 0-lists. Clearly, this can be done in at most ${\ka d_1 \choose \ta}\cdot {l_1-\ta \choose \ka-\ta}=d_2/4$ many ways.
\renewcommand{\qed}{\ensuremath{\hfill\blacksquare}}
\end{proof}
\renewcommand{\qed}{\hfill \ensuremath{\Box}}

The Claim implies that $|E_H|\le |V_H|\cdot d_2/4$, hence the undirected average degree of a node in $H$ is at most $2|E_H|/|V_H| \le d_2/2$, and by Markov's inequality, at most half of the nodes have degree greater than $d_2$. Since $L_2(S)$ is the set of nodes of degree at most $d_2$, we conclude that $|L_2(S)|\ge |V_H|/2=|\bar{L}_2(S)|/2=l_2$.
\end{proof}

Finally, we bound the ratio $l_2/d_2$ based on the values of the remaining parameters. 
\begin{lemma}[$l/d$ Ratio]
\label{lem:calc} If $k\geq \tau\geq \lceil\log(2e\beta^2)\rceil$,
$l_0\geq 2ek^2/\tau$, $\ta=2^{\tau-\lceil\log(2e\beta^2)\rceil}$, and $\ka=\beta \ta$,  then 
  $l_2/d_2> 2^{2^{\tau-\log(4e\beta^2)}}/8$.
\end{lemma}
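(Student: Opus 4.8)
The plan is to repeat, one level higher, the binomial estimate of \Cref{c:lonedone}. First I would unwind the definitions — $l_1=\binom{l_0}{\kz}$ and $l_2=\binom{l_1}{\ka}/2$ (from just before \Cref{lem:l2Size}), $d_1=\binom{\kz}{\tz}\binom{l_0-\tz}{\kz-\tz}$ (\Cref{lem:conflictDegree}), and $d_2=4\binom{\ka d_1}{\ta}\binom{l_1-\ta}{\ka-\ta}$ (\Cref{lem:l2Size}) — to write
\[
\frac{l_2}{d_2}=\frac{\binom{l_1}{\ka}/2}{4\binom{\ka d_1}{\ta}\binom{l_1-\ta}{\ka-\ta}}=\frac18\cdot\frac{\binom{l_1}{\ka}}{\binom{\ka d_1}{\ta}\binom{l_1-\ta}{\ka-\ta}}.
\]
Hence it suffices to show that the last fraction exceeds $2^{\ta}$ and that $\ta\ge 2^{\tz-\log(4e\beta^2)}$. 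The latter is immediate: $\lceil\log(2e\beta^2)\rceil\le\log(2e\beta^2)+1=\log(4e\beta^2)$, so $\ta=2^{\tz-\lceil\log(2e\beta^2)\rceil}\ge 2^{\tz-\log(4e\beta^2)}$, whence $2^{\ta}\ge 2^{2^{\tz-\log(4e\beta^2)}}$ and the leftover factor $1/8$ accounts for the constant in the target bound.

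For the fraction, I would apply inequality~\eqref{eqn:A} with $(L,K,x)=(l_1,\ka,\ta)$ to get $\binom{l_1}{\ka}\big/\binom{l_1-\ta}{\ka-\ta}>(l_1/\ka)^{\ta}$, together with the standard bound $\binom{\ka d_1}{\ta}\le(e\ka d_1/\ta)^{\ta}$ — the integrality/positivity requirements $l_1>\ka>\ta>0$ being routine to check from the standing assumption $l_0\ge 2e\kz^2/\tz$ and $\beta\ge 1$ (the case $\beta=1$ is trivial). This yields
\[
\frac{\binom{l_1}{\ka}}{\binom{\ka d_1}{\ta}\binom{l_1-\ta}{\ka-\ta}}>\left(\frac{l_1}{\ka}\right)^{\ta}\left(\frac{\ta}{e\ka d_1}\right)^{\ta}=\left(\frac{l_1\,\ta}{e\,\ka^2\,d_1}\right)^{\ta},
\]
so it remains to verify $l_1/d_1\ge 2e\ka^2/\ta=2e\beta^2\ta$ (using $\ka=\beta\ta$). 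Here the two facts established earlier do the job: \Cref{c:lonedone} applies — its hypotheses hold since $\tz\ge\lceil\log(2e\beta^2)\rceil>1$, $\kz\ge\tz$, and $l_0\ge 2e\kz^2/\tz$ — and gives $l_1/d_1>2^{\tz}$; and from $2^{\lceil\log(2e\beta^2)\rceil}\ge 2e\beta^2$ we get $2e\beta^2\ta=2^{\tz}\cdot 2e\beta^2/2^{\lceil\log(2e\beta^2)\rceil}\le 2^{\tz}$. Chaining, $l_1/d_1>2^{\tz}\ge 2e\beta^2\ta$, hence $\big(l_1\ta/(e\ka^2 d_1)\big)^{\ta}>2^{\ta}$, and therefore $l_2/d_2>\tfrac18\,2^{\ta}\ge\tfrac18\,2^{2^{\tz-\log(4e\beta^2)}}$, as claimed.

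The main obstacle is the bookkeeping around the two ceiling-based requirements, which pull in opposite directions: we need $2e\beta^2\ta\le 2^{\tz}$ (so that the ratio $l_1/d_1$ one level down, which is essentially $2^{\tz}$, still dominates $2e\beta^2\ta$) and simultaneously $\ta\ge 2^{\tz-\log(4e\beta^2)}$ (so that the final doubly-exponential bound comes out as stated). Both hold at once only because $\lceil\log(2e\beta^2)\rceil$ lies within distance $1$ of $\log(2e\beta^2)$, and the definition $\ta=2^{\tz-\lceil\log(2e\beta^2)\rceil}$ was chosen precisely to sit in that window. Everything else reduces to the same kind of routine binomial manipulation already performed in the proof of \Cref{c:lonedone}.
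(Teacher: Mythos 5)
Your proof is correct and follows essentially the same route as the paper's: expand $l_2/d_2$, apply inequality~\eqref{eqn:A} and the bound $\binom{\ka d_1}{\ta}\le(e\ka d_1/\ta)^{\ta}$ to obtain the factor $\bigl(l_1\ta/(e\ka^2 d_1)\bigr)^{\ta}$, feed in $l_1/d_1>2^{\tz}$ from \Cref{c:lonedone}, and then use the two-sided control $\ta\le 2^{\tz}/(2e\beta^2)\le 2\ta$ coming from the ceiling in the definition of $\ta$. The only cosmetic difference is that you isolate the condition $l_1/d_1\ge 2e\beta^2\ta$ as a separate check, whereas the paper chains the inequalities in a single display; the content is identical.
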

\begin{proof}
First, we get $l_1/d_1\geq 2^{\tz}$, as in Eq.~(\ref{eqn:lonedone}). Then, with ${\ka d_1\choose \ta}\leq (\frac{ekd_1}{\ta})^{\ta}$, and (\ref{eqn:A}) applied to ${l_1\choose \ka}/{l_1-\ta\choose \ka-\ta}$, we lower bound $l_2/d_2$ as
\begin{align*}
\frac{l_2}{d_2}& =\frac{1}{8} \frac{{l_1\choose \ka}}{{l_1-\ta\choose \ka-\ta}{\ka d_1\choose \ta}} 
>\frac{1}{8}\left(\frac{l_1}{\ka}\cdot\frac{\ta}{e(\ka d_1)} \right)^{\ta}
\geq\frac{1}{8}\left(\frac{2^{\tz}}{e\beta^2\ta}\right)^{\ta}{\ge}\frac{2^{\ta}}{8}\geq \frac{2^{2^{\tau-\log(4e\beta^2)}}}{8}\ ,
\end{align*}
where the third and fourth inequalities hold since $\ta\le \frac{2^\tz}{2e\beta^2}\le 2\ta$. 
\end{proof}

\subsection{Proof of the Main Theorem}

\begin{proof}[Proof of Thm.~\ref{thm:2rounds}]
Nodes solve $P_2$ in zero rounds (\Cref{l:newzero}), and then use two rounds of communication to solve the input list coloring problem $P_0$ (see algorithm description and \Cref{lem:wayup}).
We bound the messages sent by a node $v$ during the algorithm. In the first round, $v$ needs to send $K_v$ to its neighbors. Note that $K_v$ is uniquely determined by the list $L_v$ and the input color $\psi_v$  (see the proof of \Cref{l:zeroTypes}) , so it suffices to send $(\psi_v,L_v)$, which can be encoded in $l_0\lceil\log|\mathcal{C}|\rceil+\lceil\log m\rceil$ bits. In the second round, $v$ needs to send $C_v$. Since $C_v\in K_v$, and the neighbors know $K_v$, it suffices to send the index of $C_v$ in $K_v$ (in a fixed ordering).  
 Recall that $|K_v|=\ka<2^\tz$, so $v$ only needs to send $\tz\le l/4e\beta^2$ bits. 
\end{proof}
\begin{remark}
Note that in both communication rounds of \Cref{thm:2rounds} each node only needs to send messages to its in-neighbors. In contrast, the results in \Cref{sec:coloring} and \Cref{sec:listdefective} require bi-directional communication. 
\end{remark}
\section{Application: $(\Delta+1)$-Coloring and $(deg+1)$-List Coloring}
\label{sec:coloring}

\noindent\textbf{\Cref{thm:mainListColoring} (Restatement).} 
\textit{
In a graph with max. degree $\Delta$, $(deg+1)$-list coloring with lists $L_v\subseteq \mathcal{C}$ from a color space  of size $|\mathcal{C}|=2^{\poly(\Delta)}$ can be solved in $O(\sqrt{\Delta\log\Delta})+\frac{1}{2}\cdot\logstar n$ rounds in \LOCAL.
Furthermore, each node only needs to broadcast to its neighbors a single non-\CONGEST message consisting of a subset of its list. 
}
\medskip

The proof combines Thm.~\ref{thm:2rounds} with the graph partitioning provided by \cite{BEG18}, following the high level description in Sec.~\ref{sec:intro}.  
A variant of this framework was also used in~\cite{Kuhn20,BKO20}. We nevertheless present a proof for completeness, and also due to subtle but important differences from \cite{FHK} (we have an additional finishing phase that is not present there).

The graph partitioning given by~\cite{BEG18} aims at \emph{arbdefective colorings}, as introduced in \cite{barenboimE10}, but the main technical object provided by~\cite{BEG18} (and which is all we need here) is a \emph{low outdegree partition} of a graph. For a graph $H=(V,E)$, the collection $H_1,\dots,H_k$ of \emph{directed} graphs $H_i=(V_i,E_i)$ is a \emph{$\beta$-outdegree partition} of $H$ if their vertices span $V$, i.e., $V=V_1\cup\dots\cup V_k$, the underlying undirected graph of $H_i$ is the induced subgraph $H[V_i]$ (so it is indeed a partition), and the max. outdegree of a node in $H_i$ is at most $\beta$, for all $i$.

\begin{lemma}[Lemmas 6.1-6.3, \cite{BEG18}]
\label{lem:lowOutDegree} There are constants $c,c'>0$, s.t. for every $\beta\ge c$, given a graph $H$ with an $m$-coloring, 
 there is a deterministic algorithm that computes a $\beta$-outdegree partition $H_1,\dots,H_k$ with $k=c'\Delta/\beta$ in $O(k+\logstar m)$ rounds in \CONGEST.
 \end{lemma}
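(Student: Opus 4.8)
The plan is to combine a Linial-style color reduction with an iterative degree reduction that builds the partition directly; the orientations of the parts then come for free. First I would apply Linial's color reduction — the undirected special case of \Cref{thm:linialReproven}, run on $H$ equipped with the trivial $\Delta$-outdegree orientation (and iterated a constant number of times) — to turn the input $m$-coloring into a proper $O(\Delta^2)$-coloring $\varphi$ in $O(\logstar m)$ rounds. This is the only source of the $\logstar m$ term, and it also ensures that from now on every message a node sends (a color, or a part index in $[k]$) fits in $O(\log\Delta)$ bits, so everything stays within the \CONGEST bandwidth. Fix $k = c'\Delta/\beta$ for a suitable constant $c'\ge 2$.

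The correctness template I have in mind is the classical color-class greedy, recast statically: regard $\varphi$ as a priority order, and suppose we have produced an assignment $p\colon V\to[k]$ with the property that every vertex $v$ has at most $\beta$ neighbors $u$ with $p(u)=p(v)$ and $\varphi(u)<\varphi(v)$. Setting $V_i=p^{-1}(i)$ and $H_i=H[V_i]$, and orienting each edge of $H_i$ from its higher-$\varphi$ endpoint to its lower-$\varphi$ endpoint, gives each $H_i$ an orientation of outdegree at most $\beta$, i.e.\ the desired $\beta$-outdegree partition. Such an assignment always exists: a vertex has at most $\Delta$ neighbors distributed over $k\ge\Delta/\beta$ parts, so by pigeonhole it has at most $\Delta/k\le\beta$ lower-$\varphi$ neighbors in some part. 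The obstruction is purely algorithmic — sweeping through $\varphi$ in priority order realizes this in $\Omega(\Delta^2)$ rounds, and since any processing order corresponds to an acyclic orientation of $H$, whose longest directed path is at least $\chi(H)-1$, no order computable in $f(\logstar m)$ preprocessing lets a sweep beat $\Omega(\Delta)$ rounds.

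The heart of the proof is thus to compute such an assignment \emph{locally-iteratively} in $O(k)$ rounds, along the lines of the arbdefective-coloring machinery of~\cite{BEG18}. I would maintain a tentative assignment $p\colon V\to[k]$, call $v$ \emph{overloaded} if it currently has more than $\beta$ lower-$\varphi$ neighbors inside its own part, and in every round let each overloaded $v$ move to a part minimizing its current number of lower-$\varphi$ neighbors in that part — which, by the pigeonhole above, is at most $\beta$, so the move makes $v$ locally satisfied with respect to the \emph{current} configuration. One then needs to show this process converges, and converges fast: the quantity to track is a per-vertex ``load'' (the number of lower-$\varphi$ own-part neighbors, or a depth-weighted refinement of it), which starts out at most $\Delta$, and the crux is to argue that the update rule forces the maximum load to drop by $\Omega(\beta)$ every round, so that after $O(\Delta/\beta)=O(k)$ rounds no vertex is overloaded and the partition is read off $p$. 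The reason the iteration count is $O(\Delta/\beta)$ — equivalently, the reason the number of parts can be kept \emph{linear} in $\Delta/\beta$ rather than quadratic as with plain defective colorings — is precisely that we are reducing \emph{outdegree}, which behaves additively, rather than degree. Designing the update rule so that these moves do not cascade indefinitely and exhibiting the potential that certifies the $\Omega(\beta)$-per-round progress is the step I expect to be the main obstacle; the pigeonhole, the final orientation of each $H_i$, the \CONGEST bandwidth accounting, and the $\logstar m$ preprocessing are all routine by comparison.
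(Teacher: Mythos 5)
There is a genuine gap, and it sits exactly where the lemma's content is. Note first that the paper does not prove this statement at all: it is imported verbatim from Lemmas 6.1--6.3 of~\cite{BEG18} and used as a black box, so your proposal has to stand on its own. Its outer layers do stand: Linial preprocessing to an $O(\Delta^2)$-coloring in $O(\logstar m)$ rounds, the pigeonhole existence of an assignment $p\colon V\to[k]$ with at most $\beta$ lower-priority same-part neighbors per vertex, and reading off the $\beta$-outdegree orientation by orienting inside each part toward the lower-$\varphi$ endpoint are all correct and routine. But the lemma \emph{is} the claim that such a $p$ can be computed in $O(\Delta/\beta)$ \CONGEST rounds, and that is precisely the step you leave open. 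The dynamics you propose --- all currently overloaded vertices simultaneously jump to a part minimizing their current number of lower-$\varphi$ neighbors --- comes with no convergence argument: simultaneous moves can funnel many movers into the same part and create fresh overloads, the move only makes a vertex satisfied with respect to the pre-move configuration, and the asserted $\Omega(\beta)$ per-round decrease of the maximum load is not backed by any potential function (natural candidates, such as the sum or maximum of loads, do not obviously decrease under concurrent moves). As written, you have reduced the lemma to an unproven claim that you yourself flag as the main obstacle, so the proposal does not constitute a proof.

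It is also not the route of~\cite{BEG18}. Their algorithm is not a local-search/rebalancing process whose termination must be argued; it is a locally-iterative arbdefective-coloring procedure built directly on top of the $O(\Delta^2)$-coloring, in which each vertex's part is fixed by a prescribed update/commit rule over $O(\Delta/\beta)$ rounds, and the $\beta$-outdegree (arbdefect) bound is enforced by construction at the moment a vertex settles, with defect edges oriented toward the neighbors responsible for them. So the quantity your approach must control by a convergence argument is, in their proof, controlled pointwise by the commit rule itself. If you want to salvage your plan, the missing piece is exactly such a schedule or potential argument; alternatively, cite~\cite{BEG18} as the paper does.
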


\begin{proof}[Proof of Thm.~\ref{thm:mainListColoring}]
 We begin with computing an $m=O(\Delta^2)$-coloring in $\frac{1}{2}\logstar n+O(1)$ rounds \cite{linial92,SzegedyV93}. 
The main algorithm consists of $t=\log_2(\Delta/\Delta^{1/4})$ phases. After phase $j$, we have colored a subset of vertices, s.t. the maximum degree $\Delta_j$ of the graph $G[U_j]$ induced by uncolored vertices is upper bounded as $\Delta_j\le\Delta/2^j$.  
Before describing a phase $j$, let us show how we finish the coloring after the phase $t$, in a {\bf final phase}. Consider  the graph $G[U_t]$ at the end of phase $t$. Note that it has maximum degree $\Delta_t=O(\Delta^{1/4})$. 
We compute an $m'$-coloring of $G[U_t]$ with $m'=O(\Delta_t^2)=O(\sqrt{\Delta})$ from the initial $m$-coloring in $O(1)$ rounds \cite{linial92}. In each of the final $m'$ rounds $i=1,\dots,m'$, vertices with color $i$ pick a color from their list not picked by a neighbor (can be done since $|L_v|>\Delta$ and no two neighbors pick simultaneously). The runtime of the final phase is $O(\sqrt{\Delta})$.

The following happens in {\bf phase} $j=1,\dots,t$. At the beginning of the phase, we have the set $U_{j-1}$ of uncolored vertices, where $U_0=V(G)$. Let $X=4e \cdot (4\log \Delta+\log\log |\mathcal{C}|+\log\log m+8)=O(\log \Delta)$ and for $j=0,\ldots,t-1$ let $\beta_j=\sqrt{\Delta_j/(2X)}$ and $k_j=c' \cdot \Delta_j/\beta_j$, 
where $c'$ is the constant in Lemma~\ref{lem:lowOutDegree}.\footnote{In order to apply the lemma, we need $\beta_j\ge c$. Since $\beta_j\ge\beta_t=\Omega(\sqrt{\sqrt{\Delta}/\log\Delta})$, $\beta_j\ge c$ holds if $\Delta$ is large enough. For $\Delta=O(1)$, Thm.~\ref{thm:mainListColoring} holds via a $O(\Delta)+1/2\log^*n$ round algorithm (see e.g.~\cite{BEG18}).}  
We partition $G[U_{j-1}]$ into $\beta_j$-outdegree subgraphs $H_1,H_2,\dots,H_{k_j}$,  
using Lemma~\ref{lem:lowOutDegree}. The phase consists of $k_j$ stages $i=1,\dots,k_j$, each consisting of 3 rounds. 
In {\bf stage $i$}, we partially color $H_i$, as follows. For every uncolored vertex $v\in H_i$, let $L_{v,j,i}$ be the set of colors in $L_v$ that have not been taken by a neighbor of $v$. Let $W_i=\{v\in H_i : |L_{v,j,i}|\ge \beta_j^2X\}$. Color the graph $H_i[W_i]$ using Linial for Lists (Thm.~\ref{thm:2rounds}) with color space $\mathcal{C}$, the $\beta_j$-outdegree orientation and the $m$-coloring. This is a valid application of the theorem, by the definition of $X$, $\beta_j$ and $W_i$. In the third round of the stage, all nodes in $W_i$ send their color to their neighbors.
This completes the algorithm description. Clearly,  phase $j$ takes $3k_j$ rounds.

It remains to show that $\Delta_j\le \Delta/2^j$. We do this by induction, with base $j=0$, $\Delta_0=\Delta$. Assume $\Delta_j\le \Delta/2^j$ holds for some $j\ge 0$. Let $v\in U_j$ be a node that is uncolored at the end of phase $j$. We know that  $|L_{v,j,i}|<\beta_j^2X=\Delta_j/2$, in a stage $i$. Recall that $L_{v,j,i}$ is the set of colors in $L_v$  not taken by a neighbor of $v$. Since $|L_v|$ is larger than the number of neighbors of $v$, $|L_{v,j,i}|$ is  larger than the number of \emph{uncolored} neighbors of $v$.
Therefore $v$ has at most $|L_{v,j,i}|< \Delta_j/2\le \Delta/2^{j+1}$ neighbors in $U_j$, which proves the induction: $\Delta_{j+1}\le \Delta/2^{j+1}$.

Recall that $X=O(\log \Delta)$ and bound the runtime as follows:  
\[
\frac{1}{2}\log^* n +O(1) + \sum_{j=1}^{t} 3k_j+m'=\frac{1}{2}\log^* n +\sum_{j=1}^{t} 3c'\sqrt{\frac{X\Delta}{2^{j-1}}}+O(\sqrt{\Delta})=\frac{1}{2}\log^* n+O(\sqrt{\Delta\log\Delta})\ .
\]
 The second claim easily follows, recalling the message complexity of Linial for Lists.
\end{proof}
Note that the final phase in the algorithm above is necessary as otherwise, if the  recursion continued until the maximum degree of uncolored nodes was, say, $O(\log^{(3)}\Delta)$, their reduced list size would be similarly small, and we could no longer apply \Cref{thm:2rounds}, which requires lists of size $\Omega(\log\log |\mathcal{C}|)=\Omega(\log\log\Delta)$, as the color space does not change in the recursion. 

\begin{corollary}\label{cor:deltaplus1}
In a graph with max. degree $\Delta=\tilde{O}(\log n)$, $(deg+1)$-list coloring with lists $L_v\subseteq \mathcal{C}$ from a color space  of size $|\mathcal{C}|={\poly(\Delta)}$ can be solved in $\tilde{O}(\sqrt{\Delta})+\frac{1}{2}\cdot\logstar n$ rounds in \CONGEST.
\end{corollary}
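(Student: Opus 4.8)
The plan is to follow exactly the structure of the proof of Theorem~\ref{thm:mainListColoring}, tracking where the \CONGEST model imposes extra constraints and checking that the hypothesis $\Delta=\tilde O(\log n)$ removes them. The key observation is that the only two places where the algorithm of Theorem~\ref{thm:mainListColoring} leaves \CONGEST are (i) the initial $O(\Delta^2)$-coloring, which via \cite{linial92,SzegedyV93} is already \CONGEST-friendly since it only communicates $O(\log n)$-bit colors, and (ii) the invocation of Linial for Lists (Thm.~\ref{thm:2rounds}) inside each stage, whose first-round message has size $l_0\lceil\log|\mathcal{C}|\rceil + \lceil\log m\rceil$ bits. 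So the heart of the matter is bounding that message size by $O(\log n)$ under the corollary's hypotheses.

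First I would fix the parameters as in the proof of Theorem~\ref{thm:mainListColoring}: $m=O(\Delta^2)$, $X=4e(4\log\Delta+\log\log|\mathcal{C}|+\log\log m+8)$, and in phase $j$ the lists used in the call to Thm.~\ref{thm:2rounds} have size $l_0 = \beta_j^2 X$ with $\beta_j=\sqrt{\Delta_j/(2X)}\le\sqrt{\Delta/(2X)}$, so $l_0\le \Delta/2 = O(\Delta)$. Since $|\mathcal{C}|=\poly(\Delta)$ we have $\log|\mathcal{C}|=O(\log\Delta)$ and $X=O(\log\Delta)$, hence the first-round message has $l_0\lceil\log|\mathcal{C}|\rceil + \lceil\log m\rceil = O(\Delta\log\Delta) = \tilde O(\Delta) = \tilde O(\log n)$ bits, which is $O(\log n \cdot \polylog\log n)$; splitting each such message over $\polylog\log n = \polylog\Delta$ \CONGEST rounds costs only a $\polylog\Delta$ factor, absorbed into the $\tilde O$. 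The second-round message of Thm.~\ref{thm:2rounds} is only $\lceil l_0/(4e\beta_j^2)\rceil = O(X) = O(\log\Delta)$ bits, already \CONGEST. The low-outdegree partition of Lemma~\ref{lem:lowOutDegree} is stated for \CONGEST, so nothing changes there; and the final phase only ever sends single colors ($O(\log n)$ bits), so it too is \CONGEST. Summing, the runtime matches that of Theorem~\ref{thm:mainListColoring} up to the $\polylog\Delta = \polylog\log n$ blow-up from message splitting, giving $\tilde O(\sqrt{\Delta\log\Delta}) + \frac12\logstar n = \tilde O(\sqrt{\Delta}) + \frac12\logstar n$ rounds in \CONGEST.

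The main obstacle — really the only subtlety — is making sure that \emph{every} message in the pipeline is genuinely $\tilde O(\log n)$ and not secretly larger; in particular one must double-check that the ``list family / type assignment'' bookkeeping of Lemma~\ref{l:zeroTypes} needs no communication at all (it does not: $\mathcal{C}$, $m$, $\mathcal{R}$, $\mathcal{F}$, and the greedy coloring of types are all computed locally from global knowledge), so the zero-round phase of Thm.~\ref{thm:2rounds} is free in \CONGEST as well. One should also note that the color space does not shrink across recursion levels, so $\log|\mathcal{C}|=O(\log\Delta)$ uniformly; this is exactly why the hypothesis $|\mathcal{C}|=\poly(\Delta)$ (rather than $2^{\poly(\Delta)}$ as in Thm.~\ref{thm:mainListColoring}) is needed here. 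With these checks in place, the corollary follows by simulating the \LOCAL algorithm of Theorem~\ref{thm:mainListColoring} with each round expanded into $O(\polylog\log n)$ \CONGEST rounds.
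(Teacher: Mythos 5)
Your proposal is correct and follows essentially the same route as the paper: identify the single non-\CONGEST message per node (the list subset sent in the first round of Linial for Lists), bound it by $O(\Delta\log\Delta)$ bits, use $\Delta=\tilde{O}(\log n)$ to split it into $\polylog\log n$ \CONGEST messages, and absorb that factor into the $\tilde{O}(\sqrt{\Delta})$ term while all other steps (initial coloring, partition of Lemma~\ref{lem:lowOutDegree}, second-round message, final phase) are already \CONGEST. The only cosmetic difference is that the paper bounds the message via truncating lists to size $\Delta+1$ rather than via the phase parameters $\beta_j^2 X$, which changes nothing.
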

\begin{proof}
In the algorithm of \Cref{thm:mainListColoring} each vertex only participates in a single instance of Linial for lists, and all other steps of the algorithm can be implemented in \CONGEST. Thus, as stated above, the only non-\CONGEST message $M$ of a node consists of a subset of its list. Note that we can always limit the lists to size $\Delta+1$, so the number of colors in $M$ is at most $\Delta+1$. Each color can be encoded in $O(\log|\mathcal{C}|)=O(\log \Delta)$ bits. Thus, $M$ can be encoded in $O(\Delta\log\Delta)$ bits. By the assumption of the claim, we have $\Delta\le a\log n \cdot (\log\log n)^b$, for constants $a\ge 1,b\ge 0$. It follows that $\Delta\log\Delta\le a\log n \cdot (\log\Delta)^{b+1}$, as otherwise we would have $\Delta>a\log n (\log \Delta)^b>a\log n (\log \log n)^b$. Thus, each node sends at most $O(\Delta\log\Delta/\log n)=O((\log\Delta)^{b+1})$ messages more than in the \LOCAL algorithm, and the runtime increases by the corresponding factor.
\end{proof}

\section{Defective (List) Coloring}
\label{sec:listdefective}

A \emph{$d$-defective $c$-coloring} is a $c$-coloring where each vertex $v$ can have at most $d$ neighbors with the same color  as $v$. 
A \emph{$d$-defective list coloring} is a list coloring where each vertex $v$ can have at most $d$ neighbors with the same color  as $v$.
As proven in \cite{Kuhn2009WeakColoring}, one can compute a $d$-defective $O((\Delta/(d+1))^2\log m)$-coloring, given an $m$-coloring, in one round in \CONGEST. As a warm up to our ``list version",  we re-prove this result by adapting the proof of Thm.~\ref{thm:linialReproven}. 

\begin{theorem}[\cite{Kuhn2009WeakColoring}]
\label{thm:linialDefective}
Let $d\ge 0$ be an integer. In a graph with max. degree $\Delta>d$ and  an input $m$-coloring, $d$-defective coloring with $2e\cdot  \lceil\Delta/(d+1)\rceil^2\cdot \lceil\log_2 m\rceil$ colors can be computed in one round in \CONGEST.
\end{theorem}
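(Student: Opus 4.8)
The plan is to mimic the proof of Theorem~\ref{thm:linialReproven}, replacing the parameter $\beta$ (outdegree) by $\lceil\Delta/(d+1)\rceil$ and the notion of a proper coloring by that of a $d$-defective coloring. Set $\tz=\lceil\log_2 m\rceil$, let $\beta'=\lceil\Delta/(d+1)\rceil$, and $\kz=\beta'\cdot\tz$. As in Theorem~\ref{thm:linialReproven}, all nodes share the identical list $\mathcal{C}$ of size $l_0=2e\,\beta'^2\,\lceil\log_2 m\rceil\ge 2e\kz^2/\tz$, and we first solve the problem $P_1(\tz,\kz)$ with these parameters in zero rounds: this is exactly the content of Theorem~\ref{thm:linialReproven}'s proof (Claims~\ref{claim:p1conflictDegree} and~\ref{c:lonedone} together with Lemma~\ref{l:zeroTypes}), which only uses that lists are identical, $m$-colored, and that $l_0\ge 2e\kz^2/\tz$, none of which depends on the notion of adjacency being ``proper'' versus ``defective''. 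So every node $v$ locally obtains a $0$-list $C_v\subseteq\mathcal{C}$ with $|C_v|=\kz$ such that adjacent nodes' lists $\tz$-conflict in fewer than $\tz$ common colors, i.e.\ $|C_v\cap C_u|\le\tz-1$ for every edge $\{u,v\}$.

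The single communication round is then used differently than in the proper-coloring case: instead of orienting edges, each node $v$ broadcasts $C_v$ to \emph{all} its (at most $\Delta$) neighbors. Node $v$ now picks a color $c(v)\in C_v$ that is contained in as few neighbors' lists as possible — equivalently, $c(v)$ minimizing $|\{u\in N(v): c(v)\in C_u\}|$. The key counting step: the total number of incidences $\sum_{c\in C_v}|\{u\in N(v): c\in C_u\}| = \sum_{u\in N(v)}|C_v\cap C_u|\le \deg(v)\cdot(\tz-1)\le \Delta(\tz-1)$, so by averaging there exists $c(v)\in C_v$ with $|\{u\in N(v): c(v)\in C_u\}|\le \Delta(\tz-1)/\kz = \Delta(\tz-1)/(\beta'\tz) < \Delta/\beta'\le d+1$, hence at most $d$. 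Therefore at most $d$ neighbors $u$ of $v$ have $c(v)\in C_u$; since any neighbor that eventually picks $c(v)$ must have had $c(v)$ in its list $C_u$, node $v$ has at most $d$ neighbors with the same color. This gives a $d$-defective coloring, and the number of colors used is at most $|\mathcal{C}|=l_0=2e\,\lceil\Delta/(d+1)\rceil^2\,\lceil\log_2 m\rceil$, as claimed.

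One subtlety to get right is whether $\Delta>d$ ensures the relevant quantities are nondegenerate: $\Delta>d$ gives $\beta'=\lceil\Delta/(d+1)\rceil\ge 1$, and we need $\tz=\lceil\log_2 m\rceil>1$, i.e.\ $m\ge 3$ (for $m\le 2$ the statement is trivial since $2e\lceil\Delta/(d+1)\rceil^2\ge 2>m$). We also need the strict inequality $\Delta(\tz-1)/(\beta'\tz)<d+1$: since $\beta'\ge\Delta/(d+1)$ we have $\Delta/\beta'\le d+1$, and multiplying by $(\tz-1)/\tz<1$ makes it strictly less than $d+1$, so the averaged bound is at most $d$ after taking the floor. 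Finally, the message is a single set of $\kz$ colors, so the algorithm runs in \CONGEST up to the usual caveat that $\kz\log|\mathcal{C}|$ bits fit in $O(\log m)$-size messages when $\mathcal{C}=[l_0]$ and $\Delta/(d+1)$ is small; for the stated regime this matches~\cite{Kuhn2009WeakColoring}.

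\textbf{Main obstacle.} The only real content beyond reusing Theorem~\ref{thm:linialReproven} is the averaging argument that converts the ``low pairwise intersection'' guarantee into a per-node defect bound: one must check that the exchange of $\sum_{c\in C_v}(\cdots)=\sum_{u}(\cdots)$ is valid and that the constant-free bound $\Delta(\tz-1)/(\beta'\tz)$ rounds down to $d$ rather than $d+1$ — i.e.\ that the strict inequality $\Delta/\beta'\le d+1$ combined with $\tz-1<\tz$ is enough slack. This is where the proof could silently be off by one if $\kz$ or $\tz$ is chosen carelessly, so the bookkeeping around ceilings in $\beta'$ and $\tz$ is the part warranting care.
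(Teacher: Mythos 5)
Your proposal is correct and follows essentially the same route as the paper: re-use the zero-round $P_1$ solution with $\tz=\lceil\log m\rceil$, $\kz=\lceil\Delta/(d+1)\rceil\tz$, then argue by averaging over $\sum_{u\in N(v)}|C_v\cap C_u|$ that some $c\in C_v$ lies in at most $d$ neighbors' sublists. One small remark on the \CONGEST claim you flag as a ``caveat'': since all input lists are identical, $C_v$ is a fixed (globally computable) function of $\psi_v$, so a node need only send its $O(\log m)$-bit input color rather than the whole set $C_v$, which removes the caveat entirely.
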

\begin{proof}
We adapt the proof of Thm.~\ref{thm:linialReproven}.
Note that the proof holds when $G$ is not directed, and $\beta$ is replaced with $\Delta$. For a consistent notation, let $\beta=\Delta$.
 First, we solve problem $P_1(\tz,\kz)$ 
 with parameters $\tz=\lceil\log m\rceil$, $\kz=\lceil\beta/(d+1)\rceil\tz$. As  \Cref{c:lonedone} holds for any $\kz \ge \tz \ge \log m$ and $l_0\ge  2e\kz^2/\tz$  the corresponding $P_1(\tz,\kz)$ can be solved locally with list size $l_0\ge  2e\kz^2/\tz=2e\lceil\beta/(d+1)^2\rceil\tz$, which holds by our theorem assumption. Let $(C_v)_{v\in V}$ be the solution to $P_1$, with $|C_v|=\kz$. Let $v$ be a node. Since the solution is conflict free, we have $|C_v\cap C_u|< \tz$, for every neighbor of $v$. For a color $c\in C_v$, let $f(c)$ be the number of neighbors $u$ s.t. $c\in C_u$. 
It follows that $\sum_{c\in C_v}f(c)<\beta\tz$, hence there is a color $c\in C_v$ with $f(c)<\beta\tz/|C_v|\le d+1$. Each node $v$ picks a color $c\in C_v$ with minimum $f(c)$.
\end{proof}
Note that, as for Thm.~\ref{thm:linialReproven}, the set systems in the proof of Thm.~\ref{thm:linialDefective} are computed greedily. The theorem also works with a $d_1$-defective $m$-coloring as input. Then, the defect of the output coloring is $d+d_1$. By iteratively applying this result and combining it with another set system construction based on polynomials over finite fields, one can compute a $d$-defective $O((\Delta/(d+1))^2)$-coloring in $O(\logstar m)$ rounds from a given $m$-coloring~\cite{Kuhn2009WeakColoring}, \cite[Section 3.2]{BarenboimEK14}.

\begin{theorem}[Defective List Coloring]
\label{thm:listDefective}
Let $d\ge 0$ be an integer. In a graph with max. degree $\Delta>d$ and  an input $m$-coloring, $d$-defective list coloring with lists $L_v$ from a color space $\mathcal{C}$ and of size  $|L_v|\geq l=4e\lceil\Delta/(d+1)\rceil^2 \cdot (4\log\Delta + \log\log|\mathcal{C}| + \log\log m+8)$  can be solved in 2 rounds in \LOCAL, if $\mathcal{C}$, $m$ and $\Delta$  are globally known.
\end{theorem}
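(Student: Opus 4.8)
The plan is to run the proof of Theorem~\ref{thm:2rounds} essentially verbatim, changing it in exactly the two spots where the proof of Theorem~\ref{thm:linialDefective} departs from that of Theorem~\ref{thm:linialReproven}: the graph is now undirected, so every neighbor plays the role of an out-neighbor and the max outdegree $\beta$ is replaced by $\Delta$; and we shrink the parameter $\kz$ from $\beta\tz$ to $\lceil\Delta/(d+1)\rceil\tz$, which is precisely what produces the defect in the last descent step. Concretely, write $b=\lceil\Delta/(d+1)\rceil$ and fix the globally known parameters $\tz=\lceil 8\log\Delta+2\log\log|\mathcal{C}|+2\log\log m\rceil+14$, $\ta=2^{\tz-\lceil\log(2e\Delta^2)\rceil}$, $\ka=\Delta\ta$, and $\kz=b\tz$. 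Since $\Delta>d$ we have $1\le b\le\Delta$, hence $\tz\le\kz\le\Delta\tz$, and the list-size hypothesis gives $l=2eb^2\big(8\log\Delta+2\log\log|\mathcal{C}|+2\log\log m+16\big)\ge 2eb^2\tz=2e\kz^2/\tz$, so the bound $l_0\ge 2e\kz^2/\tz$ required by the machinery of \Cref{sec:zeroRound} holds.

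First I would solve $P_2(\tz,\kz,\ta,\ka)$ in zero rounds. The results of \Cref{sec:zeroRound} — \Cref{lem:conflictDegree,lem:l2Size,lem:calc} — are stated for an arbitrary $\kz$ with $\tz\le\kz\le\beta\tz$ (here $\beta=\Delta$) subject to $l_0\ge 2e\kz^2/\tz$, so they apply unchanged: the associated conflict coloring instance $\PP_2$ has conflict degree at most $d_2$, all its lists have size at least $l_2$, and $l_2/d_2>\tfrac{1}{8}2^{2^{\tz-\log(4e\Delta^2)}}$. Since $b\le\Delta$, the list size $l$ is no larger than the list size $4e\Delta^2(4\log\Delta+\log\log|\mathcal{C}|+\log\log m+8)$ used in Theorem~\ref{thm:2rounds} with outdegree $\Delta$, so the routine estimate in the proof of \Cref{l:newzero} (see \Cref{app:calculationZero}) still yields $l_2/d_2\ge m\cdot|\mathcal{C}|^{l}\ge m\cdot|\mathcal{F}_2|$; hence \Cref{l:zeroTypes} applies and $\PP_2$, and therefore $P_2$, is solved with no communication.

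It then remains to cascade down to the defective coloring in two rounds. In the first round every node $v$ learns $K_u$ for all its \emph{neighbors} $u$ (the graph is undirected now) and, exactly as in the $P_2\rightarrow P_1$ part of \Cref{lem:wayup}, discards from $K_v$ every $0$-list that $\tz$-conflicts with a $0$-list of some $K_u$; at most $\Delta(\ta-1)<\ka$ lists are discarded, so a $0$-list $C_v$ of size $\kz$ survives, and by symmetry of $\tz$-conflict adjacent $C_v,C_u$ satisfy $|C_v\cap C_u|<\tz$ — i.e.\ $(C_v)_{v\in V}$ solves $P_1(\tz,\kz)$. In the second round, instead of picking a color avoided by all neighbors, $v$ learns $C_u$ for all neighbors $u$ and picks $c(v)\in C_v$ minimizing $f(c)=|\{u\in N(v):c\in C_u\}|$. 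Since $\sum_{c\in C_v}f(c)=\sum_{u\in N(v)}|C_v\cap C_u|\le\Delta(\tz-1)<\Delta\tz$, some color has $f(c)<\Delta\tz/\kz=\Delta/b\le d+1$, hence the chosen $c(v)$ has $f(c(v))\le d$; as any neighbor $u$ with $c(u)=c(v)$ satisfies $c(v)\in C_u$ and is thus counted by $f$, node $v$ has at most $d$ neighbors of its own color. This is exactly the collision-counting trick from the proof of Theorem~\ref{thm:linialDefective}, now applied with genuine lists, and the total communication is two rounds.

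I expect the only thing requiring care to be the zero-round step: one must check that substituting $\kz=b\tz$ for $\kz=\Delta\tz$ breaks none of \Cref{lem:conflictDegree,lem:l2Size,lem:calc} nor the final $l_2/d_2\ge m|\mathcal{F}_2|$ estimate. Section~\ref{sec:zeroRound} is deliberately phrased for arbitrary $\tz\le\kz\le\beta\tz$ precisely to permit this, and since $b^2\le\Delta^2$ the list-size budget only shrinks while $\tz$ is kept pegged to $\Delta$, so the doubly-exponential slack in $l_2/d_2$ remains more than sufficient; the descent step is then a routine variant of the one in Theorem~\ref{thm:linialDefective}, so I anticipate no serious obstacle beyond this bookkeeping.
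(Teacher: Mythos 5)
Your proposal is correct and follows essentially the same route as the paper: set $\beta=\Delta$, shrink $\kz$ to $\lceil\Delta/(d+1)\rceil\tz$ while keeping $\tz,\ta,\ka$ as in Section~\ref{sec:zeroRound}, solve $P_2$ then $P_1$, and finish with the collision-counting step from Theorem~\ref{thm:linialDefective}. The paper states the final descent more tersely by simply citing Theorem~\ref{thm:linialDefective}, whereas you spell it out, but the argument is the same.
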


\begin{proof} As in Thm.~\ref{thm:linialDefective}, observe that our whole analysis of Linial for lists  holds in the case when $G$ is not directed, and $\beta$ is replaced with $\Delta$. Let $\beta=\Delta$, as before.
First, we solve problems $P_2$ and $P_1$ with parameters $\tz,\ta,\ka$ as in Sec.~\ref{sec:zeroRound}, and $\kz=\lceil\beta/(d+1)\rceil\tz$ (as observed in the beginning of Sec.~\ref{sec:zeroRound}, $P_2$ and $P_1$ can be solved for any $\tz\le \kz\le \beta\tz$). 
Thus, we require the list size of each node to be at least $l_0\ge 2e\kz^2/\tz$, which holds. Given a solution of $P_1$, we can obtain the $d$-defective coloring in the same manner as in~Thm.~\ref{thm:linialDefective}. 
\end{proof}

\section{Discussion}
\label{sec:discussion}
We conclude with several observations on our results, as well as open problems.
\begin{enumerate}
\item
It is possible to define problems $P_3,\ldots, P_t$ for any $t$, as we defined $P_1$ and $P_2$. \Cref{lem:conflictDegree} extends naturally to these problems, so the input of a node $v$ in $P_i$ is again only its initial list $L_v$. 
We  need $t$ rounds, instead of 2, to derive a solution of $P_0$ from a solution of $P_t$ (which also implies larger messages). On the other hand, we  have somewhat smaller list size requirement: $c\beta^2(\log \beta + \log^{(t)} |\mathcal{C}|+\log^{(t)} m)$, for a constant $c>0$.
In particular, one can list color in $O(\logstar \max\{|\mathcal{C}|,m\})$ rounds if lists are at least $c\beta^2\log\beta$ for a sufficiently large constant $c>0$. 
\item 
Unlike in~\cite{FHK}, our bound on the list size does not depend on $\Delta$. This result implies that, e.g., given a graph with a $\beta$-outdegree orientation and an input coloring with $2^{\poly(\beta)}$ colors and list sizes of at least $c\beta^2\cdot \log \beta$ from a color space of size $2^{\poly(\beta)}$, for a constant $c>0$, it is possible to list-color the graph in $2$ rounds. By the remark above, one can have even larger color space, by increasing the runtime accordingly.
\item A lower bound in~\cite{SzegedyV93} suggests that the coloring in \Cref{thm:2rounds} cannot be done in a single round. In particular, if one is willing to keep the doubly-logarithmic dependence on $m$ in the list size, then one has to pay a factor  exponential in $\beta$. On the other hand, we do not know how to eliminate the $\log\beta$ term, even if we use more communication.
\item The recently popular \emph{speedup} technique has mostly been used to prove lower bounds, e.g.,
 \cite{Brandt19speedup,brandt2016LLL,FOCS19MIS,br2020truly,balliu2020ruling,balliu2019classification}. Here, a problem $P_0$ is mechanically (and without communication!) transformed into a problem $P_1$ whose complexity is exactly one round less. Then, if $P_1$ cannot be solved locally one deduces that $P_0$ cannot be solved in $1$ round. By iterating this process, one can derive larger lower bounds. However, the description complexity of derived problems 
 grows exponentially, and it is very important to be able to simplify the problem description, in order to iterate the process. If $P_0$ is the $(\Delta+1)$-vertex coloring problem, this process has only been understood in the special case of $\Delta=2$, which corresponds to Linial's $\Omega(\logstar n)$ lower bound~\cite{linial92,Suomela14}. 
 While \cite{FHK} also performs a similar transformation, it is different from the speedup technique, since the transformation is not mechanical, requiring nodes to communicate for building the new problems. It may rather be seen as a transformation of \emph{problem instances} (that depend on the graph) than problems. In contrast, our transformations are mechanical, and the input  and  output labels live in the same universe as it is the case for mechanical speedup.
 \item {While our present treatment of the proof of Thm.~\ref{thm:2rounds} in terms of conflict coloring problems and problems $P_0, P_1$ and $P_2$ has the aim of connecting to the framework of~\cite{FHK} as well as to the speedup framework, we note that the proof can be stated entirely in terms of set systems, just like the proof of Linial's color reduction.}
\end{enumerate}

\begin{itemize}
\item  {\bf Open Problem:} Remove the $\log\beta$ term in Thm.~\ref{thm:2rounds} while keeping the runtime $o(\sqrt{\log \beta})$.
\end{itemize}
This question is particularly of interest because $\log\beta$ is the source of the $\sqrt{\log \Delta}$ factor in  in Thm.~\ref{thm:mainListColoring} (note that the terms depending on $m,|\mathcal{C}|$ can be reduced, by the remarks above). The non-list $O(\Delta^2)$-coloring by Linial uses, in addition to his main color reduction, a $O(\Delta^3)$-to-$O(\Delta^2)$ color reduction, using polynomials over finite fields \cite{linial92}. With a more sophisticated use of polynomials \cite{barenboim16} constructs a cover-free family for list coloring but it requires a much smaller outdegree. It is not clear if polynomials help with our question. 
\begin{itemize}
\item  {\bf Open Problem:} More generally, prove or rule out a truly local $(\Delta+1)$-coloring algorithm with $\Delta$-dependence $f(\Delta)=o(\sqrt{\Delta})$.
\end{itemize}

\section*{Acknowledgements}
This project was supported by the European Union's Horizon 2020 Research and  Innovation Programme under grant agreement no. 755839.
\clearpage
\bibliographystyle{alpha}
\bibliography{references}
\clearpage
\appendix

\section{Color Reduction via Greedy Construction of Cover-Free Families}
\label{app:Linial}

For an integer $\Delta\ge 1$, a \emph{$\Delta$-cover free} family $\mathcal{F}\subseteq 2^U$ over a universe $U$ is a collection of subsets of $U$, such that no set in $\mathcal{F}$ is contained in the union of $\Delta$ others.

To be self contained we quickly repeat how \cite{linial92} use such families to reduce a coloring: Suppose a graph $G$ of max. degree $\Delta$ is $m$-colored and there is a globally known $\Delta$-cover free family $\mathcal{F}=\{C_1,C_2\dots\}$ with $|\mathcal{F}|\ge m$ over a universe $U=[m']$.  Then $G$ can be recolored with $m'$ colors, as follows. Every node $v$ of input color $x$ selects $C(v)=C_x\subseteq U$, sends $x$ to its neighbors, after which every node picks a color $y\in C(v)\setminus \cup_{u\in N(v)}C(u)$: $y$ exists since $|N(v)|\le \Delta$ and $\mathcal{F}$ is $\Delta$-cover free. By the choice of $y$, the coloring is proper. 
\begin{theorem}
For integers $\Delta\ge 2$ and $m\ge 3$, there is a greedy algorithm that constructs a $\Delta$-cover free set family $\mathcal{F}$ of size $m$ over a universe of size at most $5.2\Delta^2\log_2(em)$.
\end{theorem}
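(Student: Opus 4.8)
The goal is a greedy construction of a $\Delta$-cover-free family of size $m$ over a small universe, and the natural tool is exactly \Cref{l:zeroTypes} applied with the trivial (single-vertex-type) setup, specialized to the $\tz$-conflict relation used in \Cref{thm:linialReproven}. So the plan is to re-run the argument of \Cref{thm:linialReproven} but paying attention to constants rather than to the distributed wrapping: pick a universe $[l_0]$ with $l_0=\Theta(\Delta^2\log m)$, set $\tz=\lceil\log_2(\text{something like }em)\rceil$ and $\kz=\Delta\tz$, and greedily build $m$ sets $C_1,\dots,C_m\in\binom{[l_0]}{\kz}$ so that no two of them $\tz$-conflict (i.e.\ $|C_i\cap C_j|<\tz$ for $i\ne j$). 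The greedy step works because, having chosen $C_1,\dots,C_{i}$, the number of $\kz$-subsets of $[l_0]$ that $\tz$-conflict with at least one of them is at most $i\cdot d_1\le m\cdot d_1$ with $d_1=\binom{\kz}{\tz}\binom{l_0-\tz}{\kz-\tz}$ (\Cref{claim:p1conflictDegree}), and \Cref{c:lonedone} gives $\binom{l_0}{\kz}/d_1>2^{\tz}\ge em\ge m$ once $l_0\ge 2e\kz^2/\tz$, so a valid $C_{i+1}$ always exists.

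The second half is the standard observation (already spelled out in \Cref{app:Linial}) that a family in which no two sets $\tz$-conflict, with all sets of size $\kz=\Delta\tz$, is $\Delta$-cover-free: if $C_0\subseteq C_1\cup\dots\cup C_\Delta$ then $C_0=\bigcup_{j=1}^\Delta (C_0\cap C_j)$, so some $|C_0\cap C_j|\ge |C_0|/\Delta=\tz$, contradicting the low-intersection property. Hence the greedily built family is $\Delta$-cover-free of size $m$.

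What remains is purely the constant-chasing to land at the claimed bound $5.2\Delta^2\log_2(em)$. Here I would set $\tz=\lceil\log_2(em)\rceil$ (so that $2^{\tz}\ge em>m$, which is what \Cref{c:lonedone} needs, using $m\ge 3$, $\Delta\ge 2$ so that $\kz\ge\tz>1$), take $\kz=\Delta\tz$, and then choose $l_0$ as the smallest integer with $l_0\ge 2e\kz^2/\tz=2e\Delta^2\tz$; one checks $2e<5.44$ and absorbs the ceiling on $\tz$ versus $\log_2(em)$ and the ceiling on $l_0$ into the slack between $2e$ and $5.2$, using $m\ge3,\Delta\ge2$ to make the additive rounding terms a small multiplicative factor. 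The only mild subtlety is that \Cref{c:lonedone} is stated with the strict hypothesis $\kz\ge\tz>1$ and yields $l_1/d_1>2^{\tz}$ rather than $\ge$, which is exactly enough since we need it to exceed $m$, not $em$; but taking $2^{\tz}\ge em$ gives comfortable room.

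The main obstacle is therefore not conceptual — the construction and the cover-free reduction are immediate from the machinery already developed — but bookkeeping: verifying that with $\tz=\lceil\log_2(em)\rceil$ and $\kz=\Delta\tz$ the rounded-up value of $2e\Delta^2\tz$ is genuinely at most $5.2\Delta^2\log_2(em)$ for all $\Delta\ge2,m\ge3$, i.e.\ that $\lceil 2e\Delta^2\lceil\log_2(em)\rceil\rceil\le 5.2\,\Delta^2\log_2(em)$. This reduces to checking $2e\cdot\frac{\lceil\log_2(em)\rceil}{\log_2(em)}+\frac{1}{\Delta^2\log_2(em)}\le 5.2$, which is worst at small $m$ and $\Delta=2$; a direct numerical check at $m=3,4,\dots$ closes it, with the ratio tending to $2e\approx5.437$... — so in fact one should double-check whether the intended bound uses a slightly larger $\tz$ or a sharper estimate of $d_1$ (e.g.\ using $\binom{\kz}{\tz}\le(e\kz/\tz)^{\tz}$ more carefully, or $\binom{l_0}{\kz}/\binom{l_0-\tz}{\kz-\tz}>(l_0/\kz)^{\tz}$ with the genuine strict inequality) to shave $2e\approx5.437$ down under $5.2$; that tightening of the $l/d$ estimate is the one place where a little care beyond \Cref{c:lonedone} as stated may be required.
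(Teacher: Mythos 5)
Your construction is essentially the same as the paper's: a greedy selection of $m$ sets of size $\kz=\Delta\tz$ with pairwise intersections below $\tz$, followed by the pigeonhole observation that such a family is $\Delta$-cover-free (the paper phrases the greedy step over graphs of functions $f:[x]\to[l]$ inside the universe $[x]\times[l]$, with ``conflict'' meaning agreement on $\ge z$ inputs, but this is just a structured special case of your subset formulation via \Cref{claim:p1conflictDegree} and \Cref{c:lonedone}). The conceptual part of your argument is correct, including the counting $i\cdot d_1\le m\cdot d_1 < l_1$ that keeps the greedy alive and the cover-free reduction.

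The gap is exactly the one you flag and do not close: with $\tz=\lceil\log_2(em)\rceil$ and the hypothesis $l_0\ge 2e\kz^2/\tz$ of \Cref{c:lonedone}, the universe has size about $2e\,\Delta^2\log_2(em)\approx 5.44\,\Delta^2\log_2(em)$, which exceeds the claimed $5.2\,\Delta^2\log_2(em)$, and no amount of rounding care fixes this since $2e>5.2$. The resolution is not a sharper estimate of $d_1$ but a different choice of base: the quantity to minimize is $\tz\, m^{1/\tz}$ (from requiring $(l_0\tz/(e\kz^2))^{\tz}\ge m$), which is minimized at $\tz\approx\ln m$, giving $l_0\approx e^2\Delta^2\lceil\ln m\rceil\le e^2\Delta^2\ln(em)=(e^2\ln 2)\,\Delta^2\log_2(em)\approx 5.13\,\Delta^2\log_2(em)$; your base-2 choice inherently loses the factor $2/(e\ln 2)\approx 1.06$, which is precisely the overshoot. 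This is what the paper does: it takes $z=\lceil\ln m\rceil$, alphabet size $l=\lceil e^2\Delta\rceil$, and needs the ratio to beat $e^{z}\ge m$ rather than $2^{z}$; in your framework this amounts to re-running \Cref{c:lonedone} with $e^{\tz}$ in place of $2^{\tz}$ under the stronger hypothesis $l_0\ge e^2\kz^2/\tz$, after which the remaining bookkeeping (including the ceilings, checked at the worst case $\Delta=2$, $m=3$) does land under $5.2$. So as written your proof establishes the theorem only with a constant slightly above $5.44$ in place of $5.2$.
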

\begin{proof}
Let  $z=\lceil \ln m\rceil$, $x=\Delta z$ and $l=\lceil e^2x/z\rceil$, and note that $z\le x\le l$. Let $\mathcal{P}$ be the set of all functions from $[x]$
to $[l]$. Two functions \emph{conflict} if they agree on at least $z$ inputs. Consider the following greedy procedure for selecting a \emph{conflict-free} set $\mathcal{R}$ of functions: 
\begin{itemize}
\item  $\mathcal{R}\gets \emptyset$, $\mathcal{T}\gets \mathcal{P}$, 
\item \textbf{while} $\mathcal{T}\neq \emptyset$
\begin{itemize}
\item $\mathcal{R}\gets \mathcal{R}\cup \{f\}$ for arbitrary $f\in \mathcal{T}$
\item $\mathcal{T}\gets \mathcal{T}\setminus\{g\in \mathcal{T}\mid g \text{ conflicts with } f\}$. 
\end{itemize}
\end{itemize}


\noindent\textbf{No pair of functions in $\mathcal{R}$ conflict:} 
After each iteration of the while-loop the set $\mathcal{T}$ only contains functions that do not conflict with any function in $\mathcal{R}$. Thus, in each iteration the function $f\in \mathcal{T}$ that we add to $\mathcal{R}$ does not conflict with any previously picked function in $\mathcal{R}$. 

\smallskip

\noindent\textbf{\boldmath$|\mathcal{R}|\geq m$:} For a given function $f$, there are at most $d={x \choose z} l^{x-z}$ functions in $\mathcal{P}$ that conflict with $f$: every such function can be obtained by making it agree with $f$ at some $z$ inputs, and choosing the value on the rest of the inputs arbitrarily.
%
Therefore, the number of functions removed from $\mathcal{T}$ at every step is at most $d$. Thus, the process runs for at least $|\mathcal{P}|/d$ iterations and using ${x \choose z}\leq (ex/z)^z$, $l\ge e^2x/z$, and $z\ge \ln m$ we obtain
$|\mathcal{R}|\ge \frac{|\mathcal{P}|}{d}\geq \left(\frac{l\cdot z}{e\cdot x}\right)^z\geq  e^z\ge m$~. 

\smallskip

Now, the $\Delta$-cover-free family $\mathcal{F}$ over the universe $U=[x]\times [l]$ consists of sets $S_f=\{(i,f(i))\mid i\in [x]\}\subseteq U$, for all $f\in \mathcal{R}$.  Let $f\neq g\in \mathcal{R}$. Since $f$ and $g$ do not conflict, there are at most $z-1$ values $i\in [x]$, s.t. $f(i)=g(i)$. Therefore, $|S_f\cap S_g|\le z-1$. Since $|S_f|=x=z\Delta$, for every $\Delta$ distinct functions $g_1,\dots,g_\Delta\in \mathcal{R}\setminus \{f\}$, 
 we have $|S_f\setminus \cup_{t=1}^\Delta S_{g_t}|\ge x-\Delta(z-1)=\Delta$.
The size of the universe is $|U|=x\cdot l\le7.4\Delta^2z\le  7.4\Delta^2 \ln (em)\le 5.2\Delta^2\log_2(em)$.
\end{proof}

\section{Detailed Calculation for the Proof of \Cref{l:newzero}}
\label{app:calculationZero}
Recall that we need to show that for $\tz=\lceil 8\log\beta + 2\log\log|\mathcal{C}| + 2\log\log m\rceil+14$, and $l_0= 2e\beta^2(\tau+1)$, it holds that $(1/8) \cdot 2^{2^{\tz-\log(4e\beta^2)}}\ge m|\mathcal{C}|^{l_0}$. We have: 
\begin{align*}
 \log \log \left(8m\cdot |\mathcal{C}|^{l_0}\right)&\leq \log\log m + \log l_0  +\log\log C +\log 3\\
& \leq  \log\log m + \log 2e+\log 3 + 2\log\beta +\log (\tz+1)  +\log\log C \ \\
& \leq \tz/2+ \log\log m + \log 2e+\log 3 + 2\log\beta   +\log\log C \\
& \leq \tz/2+\tz/2 -\log(4e\beta^2) =\tz-\log(4e\beta^2)\ ,
\end{align*}
where in the first inequality, we used $\log(x+y)\le \log x + \log y$, for $x,y\ge 1$, in the second one we used $l_0=2e\beta^2(\tz+1)$, in the third we used $\log (\tz+1) \leq \tz/2$ (holds for $\tz\geq 6$), while in the last one we used $\tz/2\ge 4\log\beta + \log\log |C| + \log\log m + 7$. 
 
Exponentiating (with base 2) both sides twice gives us the claim.
\end{document}